 \newtheorem{thm}{Theorem}[section]
 \newtheorem{cor}[thm]{Corollary}
 \newtheorem{lemma}[thm]{Lemma}
 \newtheorem{prop}[thm]{Proposition}
 \theoremstyle{definition}
 \newtheorem{ex}{Example}
 \numberwithin{equation}{section}
\newcommand{\caA}{{\mathcal A}}
\newcommand{\caF}{{\mathcal F}}
\newcommand{\caH}{{\mathcal H}}
\newcommand{\caI}{{\mathcal I}}
\newcommand{\caL}{{\mathcal L}}
\newcommand{\caO}{{\mathcal O}}
\newcommand{\bbC}{{\mathbb C}}
\newcommand{\bbN}{{\mathbb N}}
\newcommand{\bbR}{{\mathbb R}}
\newcommand{\bbZ}{{\mathbb Z}}
\newcommand{\iu}{\mathrm{i}}
\newcommand{\str}{^{*}}
\newcommand{\ep}[1]{\mathrm{e}^{#1}}
\newcommand{\dd}{\,\mathrm{d}}
\newcommand{\tr}{\mathrm{tr}}
\newcommand{\Tr}{\mathrm{Tr}}
\newcommand{\norm}[1]{\left\Vert #1 \right\Vert}
\newcommand{\dist}{\mathrm{dist}}
\newcommand{\fatlattice}{\bbZ_{\scriptscriptstyle{(\caO(L^{-\infty}))}}}
\newcommand{\be}{\begin{equation}}
\newcommand{\ee}{\end{equation}}
\newcommand{\bea}{\begin{eqnarray}}
\newcommand{\eea}{\end{eqnarray}}
\newcommand{\beann}{\begin{eqnarray*}}
\newcommand{\eeann}{\end{eqnarray*}}
\newcommand{\error}{\caO(L^{-\infty})}
\newcommand{\V}{{}}
\newcommand{\var}{\mathrm{Var}}
\newcommand{\vsmall}{  \caO(L^{-\infty})}  
\newcommand{\K}{K}
\begin{document}

\title{A many-body index for quantum charge transport}

\author{Sven Bachmann}
\address{Department of Mathematics \\ The University of British Columbia \\ Vancouver, BC V6T 1Z2 \\ Canada}
\email{sbach@math.ubc.ca}

\author{Alex Bols}
\address{ Instituut Theoretische Fysica, KU Leuven  \\
3001 Leuven  \\ Belgium }
\email{alexander.bols@kuleuven.be}

\author{Wojciech De Roeck}
\address{ Instituut Theoretische Fysica, KULeuven  \\
3001 Leuven  \\ Belgium }
\email{wojciech.deroeck@kuleuven.be}

\author{Martin Fraas}
\address{Department of Mathematics \\ Virginia Tech \\ Blacksburg, VA 24061-0123 \\ USA}
\email{fraas@vt.edu}

\date{\today }

\begin{abstract}
We propose an index for pairs of a unitary map and a clustering state on many-body quantum systems. We require the map to conserve an integer-valued charge and to leave the state, e.g.\ a gapped ground state, invariant. This index is integer-valued and stable under perturbations. In general, the index measures the charge transport across a fiducial line.  We show that it reduces to (i)~an index of projections in the case of non-interacting fermions, (ii)~the charge density for translational invariant systems, and (iii)~the quantum Hall conductance in the two-dimensional setting without any additional symmetry. Example (ii) recovers the Lieb-Schultz-Mattis theorem, and (iii) provides a new and  short proof of quantization of Hall conductance in interacting many-body systems. 
\end{abstract}

\maketitle

%%%%%%%%%%%%%%%%%%%%%%%%%%%%%%%%%%%%%%%%%%%%%%%%%
%%%%%%%%%%%%%%%%%%%%%%%%%%%%%%%%%%%%%%%%%%%%%%%%%

\section{Introduction}\label{sec:intro}

By definition, two gapped ground states belong to the same quantum phase if the Hamiltonians can be smoothly connected while keeping the gap open \cite{HastingsWen}. Discrete-valued indices, which cannot change continuously, are therefore invariants of quantum phases:  They show universality and stability. 
There has been substantial progress towards classification of quantum phases for both non-interacting systems \cite{heinzner2005symmetry,KitaevTable,schnyder2008classification,ProdanBook}, one-dimensional interacting systems~\cite{perez2008string,chen2011classification,pollmann2012symmetry,gross2012index,bachmann2014gapped,cirac2017matrix,Watanabe,Ogata}, and recently for higher dimensional interacting systems as well~\cite{RuyClassificationManyBody}.

 In this paper, we use a version of Laughlin's flux threading argument to construct a microscopic, additive many-body index. This approach is very close to the works \cite{oshikawa2000commensurability,lu2017filling,matsugatani2018universal} that unify two indices: the filling factor and the Hall conductance. Our expression of the index has the meaning of a charge transported across a fiducial line. It is directly inspired by the well-known index of projections  introduced in~\cite{ASS90} and it reduces to it in the case of non-interacting fermions.

\subsection{The index}\label{sub:IntroIndex}

Before giving the precise setting, we present the index in a loose language that assumes familiarity with many-body lattice systems. We consider a discrete $d$-torus $\Lambda=\Lambda_L$ consisting of $L^d$ sites. We focus on a bipartition of the torus given by $\Lambda = \Gamma \cup \Gamma^c$, where $\Gamma = \{0\leq x_1 < \frac{L}{2}\}$, see Figure~\ref{fig:sets_torus}.
For now, we refrain from indicating this explicitly, but all {assumptions} are meant up to an error that vanishes as $L\to\infty$.  
The index requires three basic ingredients: 
\begin{enumerate}
\item[$1)$] A family of local \textbf{charge} operators $\{Q_x: x\in\Lambda\}$ with integer spectrum. Let us write $Q_X = \sum_{x\in X}Q_x$ for the charge in region $X$.   A special role is played by  $Q := Q_{\Gamma}$.
\item[$2)$] A (quasi-)local \textbf{unitary} $U$ that conserves charge {locally}. In particular,  the operator $U\str Q U - Q$ is supported in a neighbourhood of the boundary  $\partial \Gamma=\partial_- \cup\partial_+$, see again Figure~\ref{fig:sets_torus}. We can hence split  $U\str Q U - Q=T_-+T_+$.   The operators $T_\pm$ represent the charge transported across the lines $\partial_{\pm}$ by the unitary $U$.
\item[$3)$] A $U$-invariant \textbf{state} $\Omega$ that is exponentially clustering in the $x_1$-direction. The state $\Omega$ has {local charge fluctuations} in the sense that there are operators $K_-$ and $K_+$ acting in a neighbourhood of $\partial_-$ and $\partial_+$ such that $\Omega$ is an eigenstate of $Q - \K_{-}-K_{+}$.
\end{enumerate}
\begin{figure}
\centering
\includegraphics[width = 0.3\textwidth]{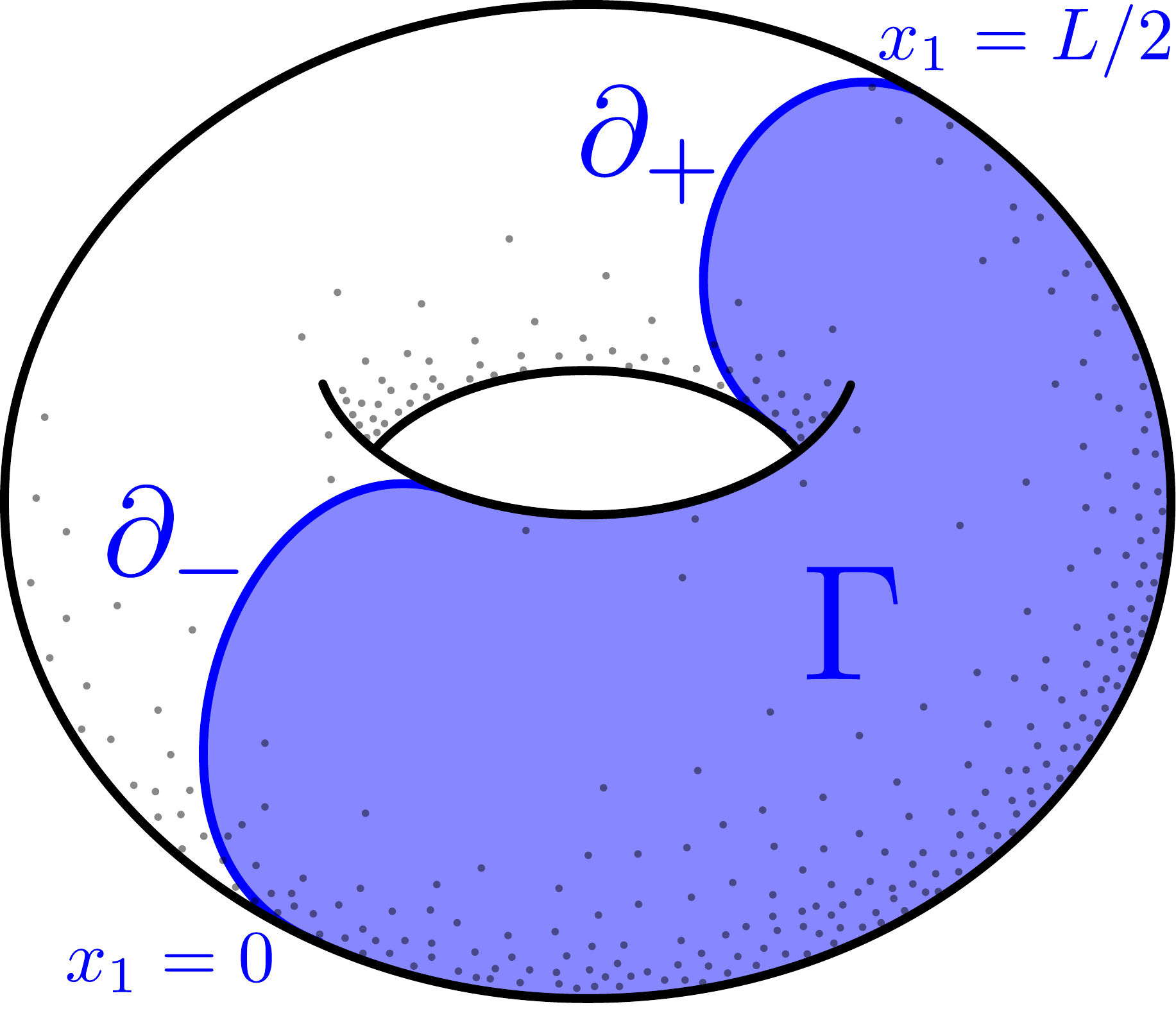}
\caption{The set $\Gamma$ and its boundaries}\label{fig:sets_torus}
\end{figure}

This setup is familiar except for the requirement of local charge fluctuations. That assumption is however satisfied if  $\Omega$ is a gapped ground state of a charge conserving local Hamiltonian. 

{Since $\Omega$ is $U$-invariant, we have obviously $\left\langle\Omega,(T_-+T_+) \Omega\right\rangle =0$, indicating that it is only interesting to consider the lines $\partial_\pm$ separately. }
The splitting of $U\str Q U - Q=T_-+T_+$ is not unique because $(T_-+\nu)+(T_+-\nu)$ is a valid splitting as well, for any $\nu \in \bbR$. We hence shall choose a splitting, namely a definition of~$T_{\pm}$, but we constrain the choice by requiring $\ep{2\pi\iu (Q+T_-)}=1$; we will later explain why this is always possible. In all examples that we consider, there is a physically natural choice for $T_{\pm}$, satisfying in particular the above requirement.
Then, the main claim of this work is that the mean charge transported across a single line is quantized: 
\begin{equation*}
\left\langle\Omega, T_- \Omega\right\rangle \, \in \, \bbZ.
\end{equation*}
%
%

%%%%%%%%%%%%
\subsection{Examples}\label{sec: examples}

We now illustrate the theorem with five examples. A detailed discussion of these examples is in Section~\ref{sec:Examples}. 

\begin{ex}[Index of projections] We consider non-interacting lattice fermions in $d=1$. We take all the many-body objects in the index formula as fermionic second quantization of the corresponding objects on the Hilbert space $l^2(\Lambda)$.
The charge operator $Q$ is the second quantization\footnote{The symbols $\Gamma$ and $\dd \Gamma$ (not to be confused with the spatial region $\Gamma$) are functors mapping one-particle operators to many-body operators, see e.g.\ \cite{reed1975methods}} $\dd \Gamma(q)$ of the indicator function $q=1_\Gamma$, the ground state $\Omega$ is the gauge-invariant quasi-free state corresponding to a Fermi projection $p$, and the unitary $U$ is the second quantization $\Gamma(u)$ of a one-particle unitary $u$. Clustering holds whenever the chemical potential lies in a spectral gap or a mobility gap. In the limit $L\to \infty$, $1_\Gamma \to 1_\bbN$ weakly and the many-body index takes the form,
\begin{equation}\label{FermionsConv}
\lim_{L \to \infty} \left\langle\Omega, T_- \Omega\right\rangle = \tr(u\str k u - k),\qquad k = p 1_\bbN  p. 
\end{equation}
with $u,p$ appropriate infinite-volume analogues for the objects above.
The operator $k$ can be deformed to a projector $\chi(k)$ without changing the value of the right hand side, which we can hence call  $\mathrm{Index}( \chi(k); u)$ in the notation of~\cite{ASSIndex}, as a particular case of a Fredholm index \cite{fredholm1903}.

\end{ex}

\begin{ex}[Lieb-Schultz-Mattis theorem] Consider now an interacting system that is translation-invariant in the $x_1$-direction, with $U$ implementing this translation $x_1 \mapsto x_1+1$. The index then takes the form
$$
-\left\langle\Omega, T_- \Omega\right\rangle = \langle\Omega, Q_{[0]} \Omega\rangle= \frac{1}{L}\langle\Omega, Q_{\Lambda} \Omega\rangle,
$$
where $[0]=\{x_1=0\}$, i.e.\ the charge in the $x_1=0$ hyperplane. 
The theorem then states that if $\Omega$ is clustering -- as is the case for a unique gapped ground state -- then the average charge per transverse layer is an integer. This is a version obtained in \cite{oshikawa2000commensurability} of the celebrated result of~\cite{LSM}, generalized in~\cite{LSM2,HastingsLSM,BrunoLSM,chen2011classification}. Often, this theorem is discussed in a setting with half-odd integer spins and $\mathrm{SU}(2)$-invariant Hamiltonian, with $Q_x=S^{(3)}_x+\tfrac{1}{2}$ and $S^{(\alpha)}_x,\alpha=1,2,3$ the spin operators.   In that case, the `charge density' in singlet states is fractional by construction if the total number of sites is odd, so the theorem actually rules out an $\mathrm{SU}(2)$-symmetric non-degenerate gapped ground state, see also~\cite{oshikawa2000commensurability}. Recent work \cite{Watanabe,OgataTasaki} has shown that this setting can be generalized even further, replacing $\mathrm{SU}(2)$ by a discrete symmetry.
\end{ex} 

\begin{ex}[Quantum Hall effect]
Here, $\Omega$ is the ground state of a many-body gapped Hamiltonian in $d=2$ dimensions. The unitary $U$ models the threading of one unit of flux pointing in the $x_1$ direction, see Figure~\ref{fig:torus}.  Standard considerations show that the transported charge across the line $x_1=0$, i.e.\   $\left\langle\Omega, T_- \Omega\right\rangle$, equals $2\pi$ times the Hall conductance. Therefore, our result yields an alternative and independent proof of the quantization of Hall conductance. It does not rely on a Chern number argument and does not require a spectral gap for all fluxes.

\begin{figure}
\centering
\includegraphics[width = 0.3\textwidth]{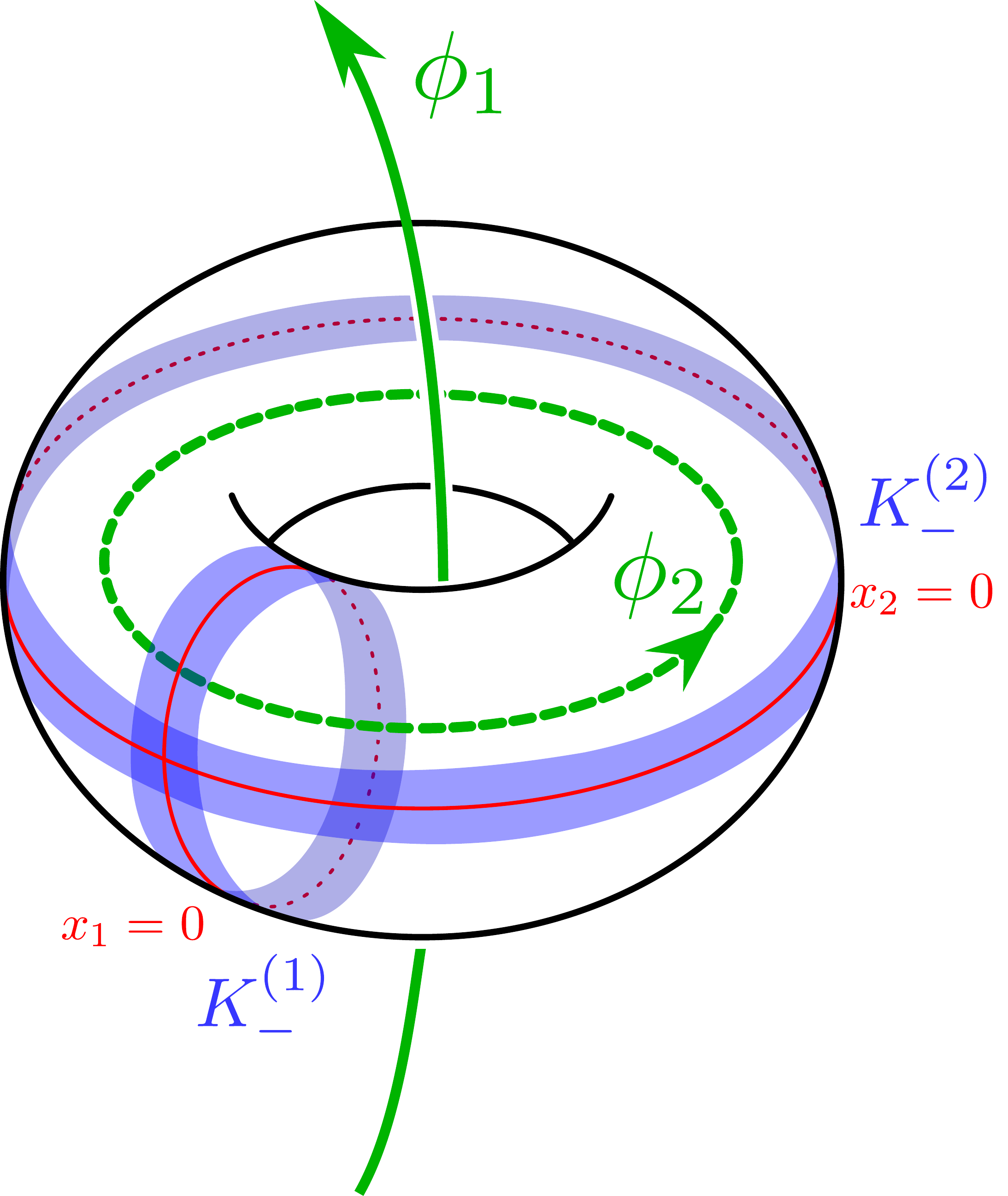}
\caption{Threading fluxes $\phi_1,\phi_2$ in the torus.}\label{fig:torus}
\end{figure}

In the non-interacting setting, the Hall conductance was related to an index in \cite{ASS90}, and in \cite{AizenmanGraf} in the case with disorder. Quantization of Hall conductance with interactions was proved in \cite{HastingsMichalakis}, see also~\cite{OurQHE} for a simplified proof and~\cite{giuliani2017universality} for the case of weakly interacting systems. Earlier mathematical approaches are described in~\cite{frohlich1997classification,BellissardSchuba}.
\end{ex}

\begin{ex}[Thouless pump]
This is a generalization of the previous example, in that $U$ can now refer to a an arbitrary  process preserving $\Omega$, not necessarily flux threading. To be concrete, $U$ usually corresponds to the unitary operator implementing the parallel transport associated with an adiabatic cycle of gapped Hamiltonians. The charge transported in a cycle is then quantized. The Thouless pump was first discussed in a non-interacting, spatially periodic setting in~\cite{Thouless83}, and generalized to disordered and interacting systems in~\cite{niu1984quantised}. It was related to the scattering approach of quantum charge transport in~\cite{braunlich2010equivalence}, where the B\"uttiker-Pr\^etre-Thomas formula~\cite{BPT} is shown to take the shape of a winding number.
\end{ex}

\begin{ex}[Bloch's Theorem]
Finally, we consider the family $U_t = \ep{-\iu tH}$, namely the propagator for some charge conserving local Hamiltonian $H$, and $\Omega$ is its gapped non-degenerate ground state. The index is then $t$ times the stationary ground state current. Since the index is integer, and $t$ can be continuously reduced to $t=0$, we conclude that the current across a fiducial line vanishes in the ground state, which is the content of Bloch's theorem \cite{Bohm,watanabeBloch}.
\end{ex}

%%%%%%%%%%%%%%%%%%%%%%%%%%%%%%%%%%%%%%%%%%%%%%%%%%%%%%%%%%%%%%%%%%%%%%%%%%%%%%%%%%%%%%%%%%%%%%%%%%%%%

\section{The index theorem}\label{sec:Index}

We first describe the technical setup in Section \ref{sec:setting}. Section \ref{sec: index theorem} contains the main result. We comment in Section~\ref{sec:comments} on various aspects of the result, in particular on the stability if the index and its thermodynamic limit.
We then elaborate on two special cases: first when  $\Omega$ is a gapped ground state (Section \ref{sec: case of gapped}), and second when the unitary $U$ is continuously connected to the identity (Section \ref{sec: trivial unitaries}). 

\subsection{Preliminaries}\label{sec:setting}

\subsubsection{Spatial setup} \label{sec: spatiall setup}
We consider a discrete $d$-torus  $\Lambda=\Lambda_L =\bbZ^d_L$ where $\bbZ_L=\bbZ/(L\bbZ)$ is identified with $\{-L/2+1,\ldots, L/2-1,L/2\}$  (we choose $L$ even), so that the cardinality is $|\Lambda|=L^d$.  
We define $\Gamma$ to be the half-torus $\Gamma=\{ 0\leq x_1 < L/2\}$. For a region $X$ and $r \in \bbN$, we denote
\begin{equation*}
X^r = \{x\in X: \mathrm{dist}(x,X)\leq r\},
\end{equation*}
where $\mathrm{dist}(\cdot,\cdot)$ is the graph distance on $\Lambda$. We use this often to fatten the boundaries $\partial_- = \{x_1 = 0\}$, $\partial_+ = \{x_1=L/2-1\}$ of $\Gamma$. Finally, we define the strips 
$$
S_{-} := \{ -\tfrac{L}{8} \leq x_1\leq \tfrac{L}{8}\},\qquad 
S_m := \{ \tfrac{L}{8} < x_1< \tfrac{L}{2}-\tfrac{L}{8}\},\qquad
S_{+} := \{ -\tfrac{L}{8} \leq x_1-\tfrac{L}{2}\leq \tfrac{L}{8}\},
$$
see Figure~\ref{fig: torus}.

We will treat spin systems and fermionic lattice systems on the same footing. We first give a brief account of both.
 
\subsubsection{Spin systems}
 Each site $x\in\Lambda$ carries a finite dimensional Hilbert space $\caH_x\equiv\bbC^n$ and $\caH=\caH_\Lambda = \otimes_{x\in \Lambda}\caH_x$ is the total Hilbert space.  The algebra of observables is $\caA= \caL(\caH)$ and we consider the subalgebras $\caA_X, X\subset \Lambda$ of observables of the form $O=O_X \otimes 1_{\Lambda\setminus X}$. We say that $A$ is supported in $X$ whenever $A\in\caA_X$.

\begin{figure}
\centering
\includegraphics[width = 0.25\textwidth]{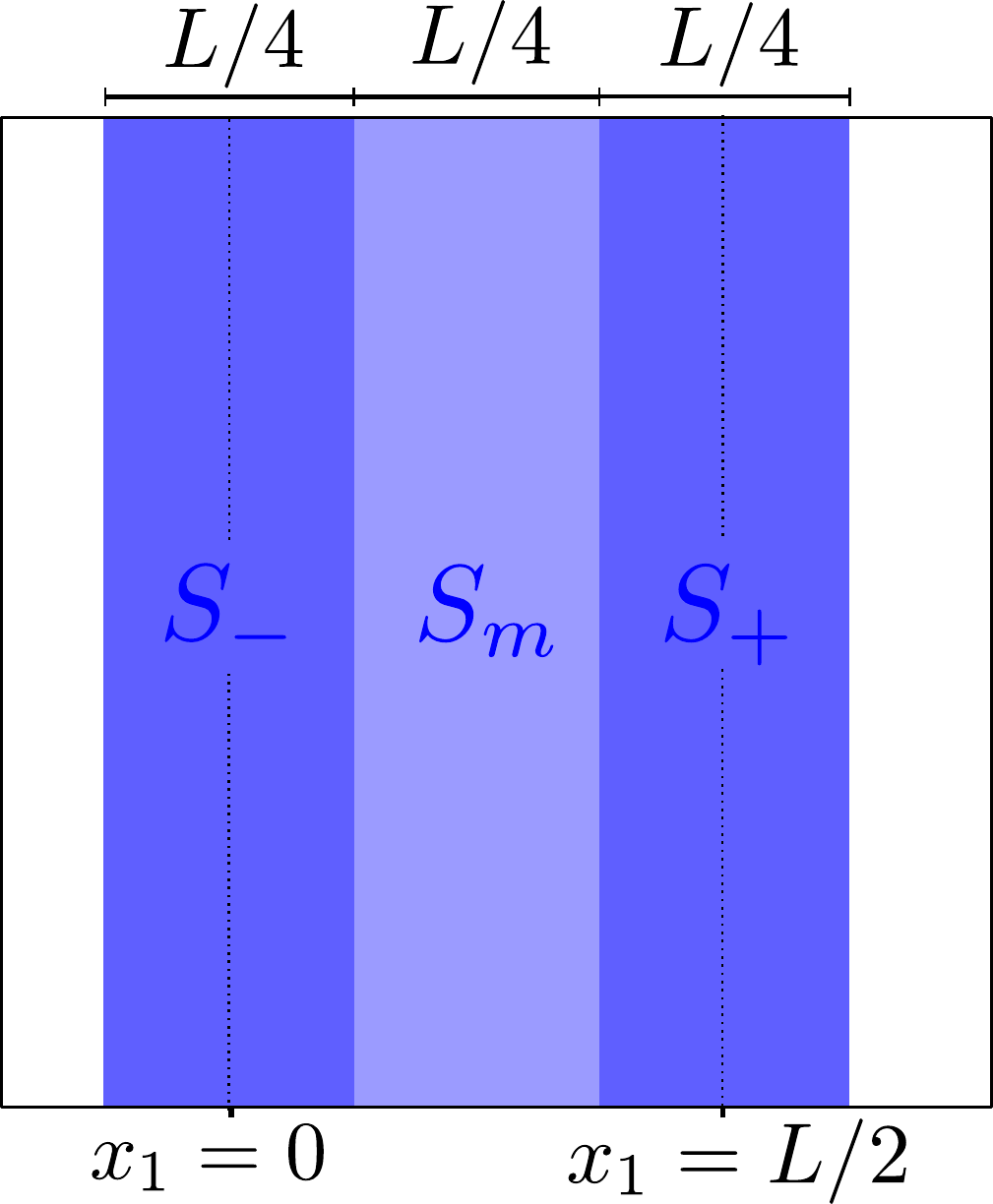}
\caption{The sets $S_{-}, S_m$ and $S_+$}
\label{fig: torus}
\end{figure}
 
 \subsubsection{Fermions}

The fermionic Fock space $\caH=\caH_{\Lambda}$ is the antisymmetric second quantization of the one-particle space $l^2(\Lambda,\bbC^n)$. There is a preferred basis in $\bbC^n$ labelled by $\sigma$ (as an example, one can think of the $z$-spin number). 
The algebra of canonical anticommutation relations $\widetilde{\caA}=\caL(\caH)$ is generated by the identity and the fermionic creation/annihilation operators $\{c_{x,\sigma},c^*_{x,\sigma}: x\in\Lambda,\sigma = 1,\ldots,n\}$ acting on $\caH$, which satisfy: 
$$
\{c_{x,\sigma},c^*_{x',\sigma'} \}=\delta_{x,x'} \delta_{\sigma,\sigma'}, \qquad   \{c^{\sharp}_{x,\sigma},c^{\sharp}_{x',\sigma'} \}= 0,
$$
where $\{A,B\}=AB+BA$ and $c^\sharp$ can be either $c$ or $c^*$. 
For any spatial set $X$,  $\widetilde{\caA}_X$ is the algebra generated by $\{c_{x,\sigma},c^*_{x,\sigma}: x\in X,\sigma = 1,\ldots,n\}$. We define the even subalgebra $\caA_X\subset \widetilde{\caA}_X$ generated by all monomials with an even number of  $c^\sharp$ operators. Alternatively, $\caA_X$ is the set of operators in  $\widetilde{\caA}_X$ that commute with the fermionic parity $\caF=\prod_{x,\sigma} (-1)^{n_{x,\sigma}}$, where $n_{x,\sigma}=  c^*_{x,\sigma} c_{x,\sigma}$. 
As for spin systems, an even element $A\in \caA_X$ is said to be supported in $X$. The restriction to even operators allows us to treat fermions and spins on the same footing, because it 
enforces the locality property of the observable algebra that we discuss in the following subsection.

\subsubsection{Locality} From now on, we do not distinguish between spin systems and fermions any more. For both, we first of all have the following basic locality property: If $X \cap X'=\emptyset $, then 
$$
  [A,A'] =0 \qquad \text{for any $A\in \caA_X, A' \in \caA_{X'}$}.
$$
Secondly, the Heisenberg dynamics of an observable $A\in \caA_X$, generated by a local Hamiltonian, satisfies a Lieb-Robinson bound~\cite{Lieb:1972ts}. These two properties, which are crucial for our results, are the main reasons why $\caA$ is chosen to be the even algebra in the fermionic case, see~\cite{Bru:2017aa,BrunoFermions}.

\subsubsection{Arbitrarily large finite systems}

In this work, we treat all finite $L$ simultaneously. The essential point of our results is that bounds hold for all $L$ with constants $c,C$ that are uniform in $L$. This means, strictly speaking, that we consider a sequence of models indexed by $L$.  For example, consider the notion of clustering for states discussed below.  If this was referring to a single fixed $L$, then the concept would be empty as one can always find $C,c$ such the bound holds for a given $\psi$.  We therefore mean it to refer to a sequence of states $\psi=\psi_L$ with fixed, $L$-independent $c,C$. This will not be repeated at every step.

\subsubsection{States}

A state is a normalized vector $\psi \in \caH, \norm{\psi}=1$.
The variance of $A$ with respect to $\psi$ is
\begin{equation} \label{def: var}
\var_\psi(A) :=  \norm{   (A -  \langle\psi, A \psi\rangle)\psi }^2.
\end{equation}
We say that a state $\psi$ is clustering in the $x_1$-direction, if there are $0<c,C<\infty$ such that
\begin{equation}
\left\vert \langle\psi,AB\psi\rangle - 
\langle\psi, A\psi\rangle \langle\psi, B\psi\rangle\right\vert
\leq C \Vert A\Vert \Vert B \Vert \vert X \vert\,\vert X'\vert\ep{-c d_1(X,X')}
\label{A:clustering}
\end{equation}
for all $A\in\caA_X,B\in\caA_{X'}$, where $$d_1(X,X') := \min\{\vert x_1 - x_1'\vert:x\in X,x'\in X'\}.$$

%%%%%%%%%%%%
\subsection{The index theorem}\label{sec: index theorem}

We now phrase the discussion of Section~\ref{sub:IntroIndex} in precise terms.
We assume local charge operators $Q_x=Q_x\str\in\caA_{\{x\}}$ with integer spectrum,
\begin{equation}\mathrm{Spec}(Q_x)\subset\bbZ.
\label{A:IntegerCharge}
\end{equation}
{For any $X\subset\Lambda$, we write $
Q_X= \sum_{x\in X}Q_x$ and we use the abbreviation $Q=Q_{\Gamma}$ since $\Gamma$ plays a central role. 
We consider a unitary $U \in\caA$ and a state $\Omega$ that satisfy the following assumptions.
\begin{enumerate}
\item   The unitary $U$ is an \emph{almost local unitary} in the sense that if $A\in\caA_X$, then for each $n\in\bbN$, there is $R_n \in \caA_{X^n}$ such that 
\begin{equation}\label{DefLocalUnitary}
\frac{\left\Vert U\str A U -R_n \right\Vert}{\Vert A\Vert \vert X \vert} =   \caO(n^{-\infty}).
\end{equation}
\item The unitary $U$ satisfies the following \emph{local charge conservation}:
There are self-adjoint operators  $T_{\pm} \in \caA_{S_{\pm}}$ such that 
\begin{equation}\label{A:Uloc}
\left\Vert   (U\str Q U - Q)
- T_{-}-T_{+}\right\Vert
= \caO(L^{-\infty}),
\end{equation}
expressing that the transported charge $U\str Q U - Q$ is supported near the boundary of~$\Gamma$. This assumption is implied by ~(i) and the global conservation law $[U,Q_\Lambda]=0$. 
\item 
The state $\Omega$ is an \emph{approximate eigenvector} of $U$ in the sense that
\begin{equation}
\label{A:Invariance}
\var_\Omega(U)=\caO(L^{-\infty}).
\end{equation}
\item 
The state $\Omega$ has \emph{local charge fluctuations}: There exist self-adjoint operators $\K_{\pm}\in\caA_{\partial_{\pm}^{L/16}}$ with $\norm{K_{\pm}} \leq C |\Lambda|$ such that
\begin{equation}\label{A:QVar}
\var_\Omega(Q - K_- - K_+)=\caO(L^{-\infty}).
\end{equation}
\item The state $\Omega$ is clustering in $x_1$-direction.
\end{enumerate}
For $0<\epsilon<1/2$, we denote by $\bbZ_{(\epsilon)}$ the fattened lattice $ \bbZ_{(\epsilon)} =\{ z\in\bbR :  \mathrm{dist}(z,\bbZ) \leq \epsilon \}$.
As we prove below, Assumptions~(i,ii) immediately imply that we can actually choose the operators $T_{\pm}$, satisfying \eqref{A:Uloc}, such that 
\begin{equation} \label{eq: choice t}
\ep{2\pi\iu (Q+T_-)}=1+\caO(L^{-\infty}),\qquad \Vert T_{\pm}\Vert\leq C|\Lambda|.
\end{equation}
We fix such a choice and we are now ready to state our main result.
\begin{thm}\label{thm:main}
If Assumptions~\emph{(i)-(v)} above hold, and for $T_-$ satisfying the constraint~(\ref{eq: choice t}), then
\begin{equation*}
\langle\Omega,T_{-}\Omega\rangle   \in \bbZ_{\scriptscriptstyle{(\caO(L^{-\infty}))}} .
\end{equation*}
\end{thm}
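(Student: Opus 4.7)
The plan is to show that $\Omega$ is an approximate eigenvector of a unitary localized in the strip $S_-$, then match phases to force integrality of $\langle\Omega,T_-\Omega\rangle$.

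First I would localize the normalization constraint. Decompose the charge as $Q = Q^- + Q^m + Q^+$ where $Q^- := Q_{\Gamma\cap S_-}$, $Q^m := Q_{S_m}$, and $Q^+ := Q_{\Gamma\cap S_+}$; the three summands commute (mutually disjoint supports in the strips $S_-,S_m,S_+$) and each has integer spectrum, so $e^{2\pi i Q^m} = e^{2\pi i Q^+} = \idtyty$. Since $T_- \in \caA_{S_-}$ commutes with $Q^m$ and $Q^+$, the constraint \eqref{eq: choice t} collapses to the local operator identity $e^{2\pi i (Q^- + T_-)} = \idtyty + \caO(L^{-\infty})$ in $\caA_{S_-}$. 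In parallel I would decompose $Q - K_- - K_+ = X_- + X_m + X_+$ with $X_\pm := Q^\pm - K_\pm$ and $X_m := Q^m$ (again pairwise commuting across the three strips); the variance of the sum splits into variances plus cross-covariances, and clustering bounds each cross-covariance by $C|\Lambda|^4 e^{-cL} = \caO(L^{-\infty})$, so Assumption~(iv) forces $\var_\Omega(X_\cdot) = \caO(L^{-\infty})$ for each strip. In particular $X_-\Omega = c_-\Omega + \caO(L^{-\infty})$ with $c_- := \langle\Omega, X_-\Omega\rangle$, and similarly for $X_m, X_+$; since $X_m = Q^m$ has integer spectrum, Chebyshev forces $c_m \in \bbZ + \caO(L^{-\infty})$.

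The key technical step combines the two localizations by a Duhamel expansion to derive an approximate eigenvalue equation for $\Omega$ under the local unitary $V_- := e^{2\pi i (K_- + T_-)} \in \caA_{S_-}$:
\[
V_- \Omega = e^{-2\pi i c_-}\,\Omega + \caO(L^{-\infty}).
\]
The intuition is that $Q^- + T_- = (K_- + T_-) + X_-$ with $X_-$ acting as the scalar $c_-$ on $\Omega$, so the identity $e^{2\pi i (Q^- + T_-)} \Omega \approx \Omega$ transfers into one for $V_-$. An analogous identity $V_+ \Omega \approx e^{-2\pi i c_+}\Omega$ holds for $V_+ := e^{2\pi i (K_+ + T_+)} \in \caA_{S_+}$.

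Finally I would extract the integer by matching phases. The product $V_- V_+$ commutes (disjoint supports) and acts on $\Omega$ as $e^{-2\pi i (c_- + c_+)}$ by the two eigenvalue equations; it also equals $e^{2\pi i (K_- + K_+ + T_- + T_+)}$, and using $K_- + K_+ = Q - c - X_m - (X_-{-}c_-) - (X_+{-}c_+)$ together with $T_- + T_+ \approx U^*QU - Q$ and the identity $U^* e^{2\pi i Q} U = \idtyty$, a second Duhamel calculation gives $V_- V_+ \Omega \approx e^{-2\pi i c}\Omega$ with $c = c_- + c_m + c_+$. Matching reproduces $c_m \in \bbZ + \caO(L^{-\infty})$ (a consistency check), and a finer matching of $\langle\Omega,V_-\Omega\rangle \approx e^{-2\pi i c_-}$ with the clustering factorization $\langle\Omega,V_-V_+\Omega\rangle \approx \langle\Omega,V_-\Omega\rangle\langle\Omega,V_+\Omega\rangle$ and the $U$-invariance relation $\langle\Omega,T_-+T_+\Omega\rangle=\caO(L^{-\infty})$ pins $\langle\Omega,T_-\Omega\rangle$ into $\bbZ$ modulo $\caO(L^{-\infty})$.

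The main obstacle is the Duhamel remainder in the third paragraph: the commutator $[X_-, K_- + T_-]$ has operator norm only polynomially bounded in $L$, which is a priori too weak to yield an $\caO(L^{-\infty})$ error. I expect to overcome this by combining almost locality (Assumption~(i)) to approximate $T_-$ by an operator strictly supported near $\partial_-$, together with Chebyshev applied to $\var_\Omega(X_-) = \caO(L^{-\infty})$, which confines $\Omega$ to an exponentially narrow spectral window of $X_-$ where the would-be problematic commutator terms act on $\Omega$ trivially up to $\caO(L^{-\infty})$.
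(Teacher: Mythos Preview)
There are two genuine gaps, and both are structural rather than technical.

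\textbf{Gap 1: the Duhamel step cannot be controlled as proposed.} You want to pass from $e^{2\pi i(Q^-+T_-)}\Omega\approx\Omega$ to $e^{2\pi i(K_-+T_-)}\Omega\approx e^{-2\pi i c_-}\Omega$ by replacing $X_-=Q^--K_-$ by the scalar $c_-$. The Duhamel remainder is governed by $\|(X_--c_-)e^{2\pi i s(K_-+T_-)}\Omega\|$, and since $K_-+T_-$ does not commute with $X_-$ (both live in $S_-$, with no further separation), applying $e^{2\pi i s(K_-+T_-)}$ rotates $\Omega$ out of the narrow $X_-$-spectral window you invoke. Chebyshev only controls $\|(X_--c_-)\Omega\|$, not $\|(X_--c_-)\psi\|$ for a generic $\psi$ obtained from $\Omega$ by a local unitary of norm-one but with generator of norm $\caO(|\Lambda|)$. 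Shrinking the support of $T_-$ towards $\partial_-$ does not help, because $K_-$ already lives there and $[Q^-,K_-]$ is the obstruction. In fact there is no reason to expect $\Omega$ to be an approximate eigenvector of the generic local unitary $e^{2\pi i(K_-+T_-)}$ at all.

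\textbf{Gap 2: even granting the eigenvalue equation, the phase matching does not isolate $\langle\Omega,T_-\Omega\rangle$.} Your eigenvalue $e^{-2\pi i c_-}$ contains no information about $T_-$; it only says the spectral measure of $K_-+T_-$ in $\Omega$ is concentrated on $-c_-+\bbZ$. But that does \emph{not} force the mean $\langle\Omega,(K_-+T_-)\Omega\rangle$ into $-c_-+\bbZ$ (an integer-valued random variable can have non-integer mean), and $\var_\Omega(K_-+T_-)$ has no reason to be small since $\var_\Omega(K_-)$ is generically $O(1)$. The additional relations you list (clustering factorization of $\langle\Omega,V_-V_+\Omega\rangle$, and $\langle\Omega,T_-+T_+\Omega\rangle\approx 0$) only reproduce $c_m\in\bbZ$; they never produce an equation in which $\langle\Omega,T_-\Omega\rangle$ appears alone modulo integers.

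The paper avoids both problems by never forming a single exponential of a sum. It works with the \emph{product} $Z_-(\phi)=e^{i\phi(Q_-+T_--K_-^U)}e^{-i\phi(Q_--K_-)}$, shows $\Omega$ is an approximate eigenvector of $Z_-(\phi)$ for all $\phi$ via clustering (Lemma~\ref{lem:Var W1W2}), and then \emph{differentiates} $\chi(\phi)=\langle\Omega,Z_-(\phi)\Omega\rangle$. A symmetry lemma swaps the $-$ exponentials for $+$ ones, so that clustering factorizes $\chi'(\phi)\approx i\langle\Omega,T_-\Omega\rangle\chi(\phi)$, whence $\chi(\phi)\approx e^{i\phi\langle\Omega,T_-\Omega\rangle}$. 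This is where $\langle\Omega,T_-\Omega\rangle$ enters, linearly in $\phi$, with no Duhamel needed. The constraint \eqref{eq: choice t} is used only at $\phi=2\pi$, to rewrite $Z_-(2\pi)$ as $U^*WUW^{-1}$ with $W=e^{2\pi i(Q_--K_-)}$; since $\Omega$ is an approximate eigenvector of both $U$ and $W$ separately, $\chi(2\pi)\approx 1$, and integrality follows. The continuous family and the differentiation are the missing ideas.
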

\subsection{Comments}\label{sec:comments}
\subsubsection{Constraints on the choice of $T_{\pm}$}\label{sub:nu}
We consider the splitting
\begin{equation}\label{Decomp of Q}
Q=Q_{\Gamma}=Q_-+Q_m+Q_+
\end{equation}
for the charges $Q_{\Gamma \cap S_\alpha}$, with $\alpha=(-,m,+)$,  respectively. 
Let $\tilde T_{\pm}$ be a choice satisfying Assumption~(ii).  We note that
$$
U\str Q U= (Q_-+\tilde T_-)+ Q_m + (Q_++\tilde T_+) +\caO(L^{-\infty})
$$
where the operators  $(Q_-+\tilde T_-),  Q_m, (Q_++\tilde T_+)$ are supported on  $S_{-},  S_m, S_{+}$, respectively, see Section~\ref{sec: spatiall setup}. Since they act on distinct tensor factors, the spectrum of their sum is the sumset of their spectra. As the spectrum of $Q_m$ is integer, and that of $U\str Q U$ as well, it follows by spectral perturbation theory that there is a $\nu \in \bbR$ such that
$$
\mathrm{Spec} (Q_-+\tilde T_-) \in \bbZ_{\scriptscriptstyle{(\caO(L^{-\infty}))}}-\nu,\qquad  \mathrm{Spec} (Q_++\tilde T_+) \in \bbZ_{\scriptscriptstyle{(\caO(L^{-\infty}))}}+\nu.
$$
So we let 
$$
T_{\pm}:= \tilde T_{\pm} \mp \nu,
$$
which ensures indeed that $\ep{2\pi\iu (Q+T_\pm)}=1+\caO(L^{-\infty})$. The possibility of satisfying the bound $\Vert T_{\pm}\Vert \leq C|\Lambda|$ follows from $\Vert Q\Vert, \Vert U^*QU\Vert \leq C|\Lambda|$. Of course, even these constraints define $T_{\pm}$ only up to (not too large) integers. As already pointed out, all examples come with a natural choice. 

\subsubsection{Geometry}
For clarity and simplicity, we have chosen the total volume $\Lambda$ to be a discrete $d$-dimensional torus, but this is not essential. In particular, the topological nature of Theorem~\ref{thm:main} is unrelated to this spatial torus. 
The results would hold just as well in the following setup, and proofs would only require cosmetic modifications.  Let $\Lambda=\Lambda_L$ be a graph of diameter $L$ in graph distance. 
Let it have bounded dimension $d$ in the sense that    $\sup_{x'\in\Lambda}|\{x\in\Lambda : \dist(x,x')<R\} | \leq C R^d$. 
Let there be a set $\Gamma$, replacing the half-torus, such that its boundary $\partial \Gamma=\{x\in\Lambda : \dist(x,\Gamma) \leq 1 \text{ and } \dist(x,\Gamma^c) \leq 1  \}$ can be partioned in two sets $\partial \Gamma=\partial_{-} \cup \partial_+$ such that $\dist(\partial_-,\partial_+) >c L$. The operator $T_-$ is now measuring the charge transport across the boundary $\partial_-$.

\subsubsection{Stability of the index} \label{sec: stability of index}

We consider the stability of the index under continuous deformations of the unitary $U$ and of the state $\Omega$.
To that end, let $[-1,1]\ni s\mapsto U_s,\Omega_s$ be families such that Assumptions (i)-(v) are satisfied for each $s$, with implicit constants featuring in the definition of quasi-locality in $U$ and the bounds $\caO(L^{-\infty})$ that can be chosen uniformly is $s$.  
In particular, the indices $\langle \Omega_s,T_{-}(s)\Omega_s\rangle$ are quantized, pointwise for any $s$. We assume that  $U_s$ and $\Omega_s$ are continuous in $s$ (topology does not matter since these are finite-dimensional vector spaces), and the modulus of continuity may depend on L.

We now show that the index is continuous and hence nearly constant for large $L$, provided the choice of $T_-(s)$ is consistent along~$s$, which can always be done.
 \begin{prop}\label{prop:stability}
There is a choice $T_-(s)$ satisfying Assumptions~\emph{(i)-(v)}, the constraint \eqref{eq: choice t}, and such that $ s\mapsto T_-(s)$ is continuous.  For such a choice of $T_-(s)$, the index is constant up to $\vsmall$:
\begin{equation}\label{Constant Index}
  \sup_{s\in[-1,1]} |\langle \Omega_s,T_{-}(s)\Omega_s\rangle -\langle \Omega_0,T_{-}(0)\Omega_0\rangle|  =\vsmall.
\end{equation}
In particular, the integer closest to $\langle \Omega_s,T_{-}(s)\Omega_s\rangle$ is independent of $s$ for $L$ large enough.
\end{prop}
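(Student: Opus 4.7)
The plan is to combine Theorem~\ref{thm:main}, applied pointwise in $s$, with a continuity argument. For each $s$ the quantity $f(s):=\langle\Omega_s,T_-(s)\Omega_s\rangle$ lies in $\bbZ_{(\caO(L^{-\infty}))}$. If I can produce a continuous family $s\mapsto T_-(s)$ making $f$ continuous, then for $L$ large enough the $\caO(L^{-\infty})$ error drops below $1/4$, the connected image $f([-1,1])$ is forced into a single component $[n-\tfrac14,n+\tfrac14]$ of $\bbZ_{(1/4)}$, and the stability bound~\eqref{Constant Index} follows immediately, together with its consequence for the nearest integer.

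To construct a continuous splitting I would start from a natural continuous choice $\tilde T_\pm(s)$ of the decomposition $U_s^*QU_s-Q=\tilde T_-(s)+\tilde T_+(s)+\caO(L^{-\infty})$; one may for instance apply an $s$-independent quasi-local averaging onto observables supported in $S_\pm$ to the continuous family $U_s^*QU_s-Q$, obtaining $\tilde T_\pm(s)$ that depend continuously on $s$ and satisfy Assumption~(ii) with uniform bounds. Following Section~\ref{sub:nu}, at each $s$ there exists $\nu(s)\in\bbR$, determined modulo~$\bbZ$ up to $\caO(L^{-\infty})$, such that $\mathrm{Spec}(Q_-+\tilde T_-(s))\subset\bbZ_{(\caO(L^{-\infty}))}-\nu(s)$. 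Setting $T_\pm(s):=\tilde T_\pm(s)\mp\nu(s)$, with $\nu(s)$ replaced by a continuous $\bbR$-lift, yields a continuous family satisfying~\eqref{eq: choice t} and the norm bound $\Vert T_\pm(s)\Vert\leq C|\Lambda|$.

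With this choice in hand, $f$ is continuous on $[-1,1]$ as a composition of continuous operations on finite-dimensional spaces, the uniformity of the hypotheses in $s$ makes the $\caO(L^{-\infty})$ bound of Theorem~\ref{thm:main} uniform in $s$, and the confinement argument of the first paragraph concludes the proof.

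The main obstacle is precisely the continuous lift of $\nu(s)$: the point-set distance from $\mathrm{Spec}(Q_-+\tilde T_-(s))$ to $\bbZ$ is not smooth across half-integers, so I cannot take the distance as the definition of $\nu(s)$. Instead, I would rely on the fact that for a Hermitian family continuous in $s$, the sorted eigenvalues are continuous functions of $s$; picking one such eigenvalue track $\lambda(s)$ of $Q_-+\tilde T_-(s)$ and lifting its fractional part continuously from $[-1,1]$ to $\bbR$ (possible because $[-1,1]$ is simply connected) defines $\nu(s)$. The choice is essentially independent of which eigenvalue is tracked, since the whole spectrum lies within $\caO(L^{-\infty})$ of the same translate $\bbZ-\nu(s)$.
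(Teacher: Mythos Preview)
Your proposal is correct and follows essentially the same approach as the paper's proof: construct a continuous $T_-(s)$, observe that $f(s)=\langle\Omega_s,T_-(s)\Omega_s\rangle$ is then continuous and, by Theorem~\ref{thm:main} applied with uniform constants, takes values in $\bbZ_{(\caO(L^{-\infty}))}$, so the nearest integer $n(s)$ is locally constant and hence constant on $[-1,1]$. The paper's argument is terser---it simply asserts that continuity of $s\mapsto U_s^*QU_s-Q$ yields a continuous choice of $T_-(s)$ via the construction of Section~\ref{sub:nu}---whereas you spell out the eigenvalue-tracking mechanism that produces a continuous $\nu(s)$; this is a legitimate and slightly more explicit version of the same step.
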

\begin{proof}
By assumption, the family $s\mapsto U_s\str Q U_s - Q$ is continuous, which implies the existence of a continuous choice for~$T_{-}(s)$, see Section~\ref{sub:nu}. It follows that $s\mapsto \langle \Omega_s,T_{-}(s)\Omega_s\rangle$ is continuous. Since there is $n(s)\in\bbZ$ such that 
\begin{equation*}
\vert \langle \Omega_s,T_{-}(s)\Omega_s\rangle - n(s) \vert = \caO(L^{-\infty}),
\end{equation*}
we conclude that, for any fixed $L$, the function $s\mapsto n(s)$ is locally constant , and hence constant on $[-1,1]$. 
This in turn implies~(\ref{Constant Index}). 
\end{proof}

\subsubsection{Thermodynamic limit}
In this subsection, we reinstate the $L$-dependence. The existence of the thermodynamic limit, $L \to \infty$, for local expectations in the state $\Omega_L$ is independent of all the previous considerations. If we assume it, then we can discuss convergence of the index, and in that case, the index is easily showed to converge provided the observable $T_-$ is sufficiently local. By identifying the torus with a square, the C*-algebras of observables $\caA_L$ for different values of $L$ are naturally embedded into each other by tensoring with the identity. The limiting algebra $\caA_\infty$ is the norm completion of $\cup_L\caA_L$. The states $\langle \Omega_L,(\cdot)\Omega_L\rangle$ extend to states on $\caA_\infty$.
\begin{cor}\label{Corrolary: TD limit}
Assume that the sequence of states $\langle \Omega_L,(\cdot)\Omega_L\rangle$ on $\caA_\infty$ is weak-* convergent and that the sequence $T_{(-,L)}$ is convergent in $\caA_\infty$, then the limit
\begin{equation*}
\mathrm{Ind}(U,\Omega):= \lim_{L\to\infty}\langle \Omega_L,T_{(-,L)}\Omega_L\rangle
\end{equation*}
exists and is an integer.
\end{cor}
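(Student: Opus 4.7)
The plan is to combine the quantization furnished by Theorem~\ref{thm:main} with the two convergence hypotheses in a standard way. Let $\omega_\infty$ denote the weak-* limit of the states $\langle\Omega_L,(\cdot)\Omega_L\rangle$ on $\caA_\infty$, and let $T_\infty\in\caA_\infty$ be the norm limit of the sequence $T_{(-,L)}$, where the latter is read in $\caA_\infty$ via the embeddings $\caA_L\hookrightarrow\caA_\infty$ described just before the corollary.

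First I would show that $\langle\Omega_L,T_{(-,L)}\Omega_L\rangle\to \omega_\infty(T_\infty)$. The triangle inequality gives
\begin{equation*}
\bigl|\langle\Omega_L,T_{(-,L)}\Omega_L\rangle - \omega_\infty(T_\infty)\bigr| \leq \|T_{(-,L)}-T_\infty\| + \bigl|\langle\Omega_L,T_\infty\Omega_L\rangle - \omega_\infty(T_\infty)\bigr|,
\end{equation*}
and the first summand vanishes by the hypothesis on $T_{(-,L)}$, while the second vanishes by weak-* convergence evaluated on the \emph{fixed} element $T_\infty\in\caA_\infty$.

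Next I would invoke Theorem~\ref{thm:main}: for each $L$ there is an integer $n_L\in\bbZ$ with $|\langle\Omega_L,T_{(-,L)}\Omega_L\rangle - n_L|=\caO(L^{-\infty})$. Since the real sequence $\langle\Omega_L,T_{(-,L)}\Omega_L\rangle$ converges, the sequence $(n_L)$ is Cauchy in $\bbZ$, hence eventually constant and equal to some $n\in\bbZ$. Passing to the limit in the quantization bound identifies $\omega_\infty(T_\infty)=n$, which is the claim.

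There is no genuine obstacle in this argument; the content lies entirely in Theorem~\ref{thm:main} together with the two convergence assumptions. The only mild subtlety is the implicit use of the embedding $\caA_L\hookrightarrow\caA_\infty$ that allows both limits to be formulated in the same algebra; once that identification is made, the combination of pointwise quantization with convergence of the real sequence forces the limit to be an integer.
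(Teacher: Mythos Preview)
Your argument is correct and follows the same approach as the paper's own proof, which merely states that existence of the limit follows from the hypotheses and integrality from Theorem~\ref{thm:main}. Your version simply makes explicit the triangle-inequality step and the observation that a convergent sequence of integers is eventually constant.
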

\begin{proof}
The existence of the limit is by assumption. The fact that its value is integer is an immediate corollary of Theorem~\ref{thm:main}.
\end{proof}
Combining this corollary with Proposition~\ref{prop:stability}, we conclude that $\mathrm{Ind}(U,\Omega)$ is stable (namely constant) under continuous deformations not violating the assumptions.

As we already pointed out, the convergence of $T_{(-,L)}$ in $\caA_\infty$ must be understood as a locality property: if the condition of the above corollary holds, there exists a limiting operator $T_{-}$ which can be approximated in norm by the strictly local $T_{(-,L)}$. This always holds in one dimension by Assumption~(ii).  In the two-dimensional setting of the quantum Hall effect the index is the Hall curvature, which in turn is a local observable, see also Theorem~1.4 and its proof in~\cite{OurQHE}.
There are however situations where $T_-$ cannot be expected to converge, the typical one being the Lieb-Schultz-Mattis setting in dimensions larger or equal to two: in this case, the index is the total charge in a slab of volume $L^{d-1}$, which should of course not converge to a finite number.

\subsubsection{The assumption of `local charge fluctuations'}

Assumption~(iv) is clearly the most unfamiliar one. Finding a general condition for its validity remains an open question. The notable exception is that of states $\Omega$ that are ground states of a charge conserving Hamiltonian $H$ which satisfies the following condition: For any local $O$ with $\langle\Omega, O\Omega\rangle=0$ there is a  quasilocal $\widetilde{O}$  such that $O\Omega = [H,\widetilde{O}]\Omega$. This condition, which is proven for gapped $H$~\cite{Hastings:2004go}, ensures that one can construct the operators $K_-,K_+$, see Section~\ref{sec: case of gapped} below. The idea is that the operator $\overline Q := Q - K_- - K_+$ gives rise to the same mean charge transport as $Q$, namely $\langle \Omega, T_{-}\Omega \rangle$, because the expectation value of $U^*K_{-}U-K_{-}$ vanishes by invariance of the state under $U$. 
This is analogous to the fact that changing $k$ in \eqref{FermionsConv} by local, i.e.\ trace-class perturbations, does not change the value of the index. 

\subsubsection{The Laughlin argument revisited}\label{sec: flux threading}
Our proof of Theorem~\ref{thm:main} utilizes a version of the Laughlin's flux threading argument. Using (\ref{A:QVar}) and (\ref{Decomp of Q}) we can express the action of a gauge transformation $\ep{\iu \phi Q}$ on the ground state as
\begin{align*}
\ep{\iu \phi Q} \Omega &\sim \ep{\iu \phi Q} \ep{-\iu\phi (Q-K_--K_+)}\Omega + \caO(L^{-\infty}) \\
&=\left(\ep{\iu \phi Q_-} \ep{- \iu \phi (Q_- - K_-)}\right) \left(\ep{\iu \phi Q_+} \ep{- \iu \phi (Q_+ - K_+)}\right) \Omega + \caO(L^{-\infty}) \\ &=: F_-(\phi) F_+(\phi) \Omega + \caO(L^{-\infty}).
\end{align*}
where $\sim$ denotes an equality up to a phase. The operations $F_\pm(\phi)$ correspond to the adiabatic change of a gauge potential by $\pm \phi$ across the line $\partial_{\pm}$. When acting on the ground state of a gapped Hamiltonian $H$, $F_-(\phi)$ implements the (quasi-)adiabatic flow corresponding to threading the flux $\phi$ across the line $\partial_-$, see \cite{OurQHE} or~\cite{HastingsMichalakis}.  Developing the interpretation further, the operation of inserting the fluxes at both ends and compensating for it by a gauge transformation,
\begin{equation}
\label{2.2.2.1}
\ep{-\iu \phi Q} F_-(\phi) F_+(\phi) = \ep{- \iu \phi (Q - K_- - K_+)},
\end{equation}
leaves the ground state invariant up to a phase by (\ref{A:QVar}) as it should. 

Repeating the same reasoning, we further have
\begin{align*}
\ep{\iu \phi Q} \Omega &\sim \ep{\iu \phi Q} U^*  \ep{- \iu \phi (Q - K_- - K_+)} U \Omega + \caO(L^{-\infty})\\
 & = \left(\ep{\iu \phi Q_-} \ep{- \iu \phi (Q_- + T_- - U^* K_- U)}\right) \left(\ep{\iu \phi Q_+} \ep{- \iu \phi (Q_+ + T_+ - U^* K_+ U)}\right) \Omega + \caO(L^{-\infty}) \\ &=: F_-^U(\phi) F^U_+(\phi) \Omega + \caO(L^{-\infty}),
\end{align*}
where $F^U_-(\phi)= U^* F_-(\phi) U + \caO(L^{-\infty}) $. The exact nature of $U$ is irrelevant for the present discussion, but in the case of the Hall effect, it corresponds to the change of gauge potential across a line orthogonal to $\partial_\pm$. Using (\ref{2.2.2.1}) and the corresponding equation for $F^U_\pm$ we find that the operation of inserting flux $\phi$ at both ends, compensating by a gauge transformation, acting by $U$ and reverting this leaves the ground state invariant, see (\ref{TTmTp}). Focusing on the strip $S_-$, this means that the operation gives the ground state back up to a phase factor,
$$
\ep{\iu \phi Q_-} F^U_-(-\phi) \ep{-\iu \phi Q_-} F_-(\phi) \Omega = \chi(\phi) \Omega + \caO(L^{-\infty}).
$$
A computation, see Lemma~\ref{lem:the phase understood}, shows that the argument of $\chi(\phi)$ is linear in $\phi$ with the slope given by the index $\langle\Omega, T_- \Omega\rangle$. On the other hand at flux $\phi=2 \pi$, the compensating gauge transformation is identity, and hence $F_-(2 \pi)$ by itself leaves $\Omega$ invariant up to a phase.  Consequently, at $\phi=2\pi$,  the phase factor  $\chi(2\pi)$ has to be equal to $1$. This establishes integrality of the index. This reasoning is expanded to a sequence of Lemmas in Section~\ref{sec:Proofs}.

This method was previously used with a different gauge transformations and flux threading operators by \cite{oshikawa2000commensurability,lu2017filling} to derive Lieb-Schultz-Mattis theorem and filling constraints in integer and fractional quantum Hall effect. The local flux threading operators that we use originate in~\cite{HastingsMichalakis}.

%%%%%%%%%%%%
\subsection{The case of a unique gapped ground state}\label{sec: case of gapped}

We consider the case, relevant for all the examples below, where the state $\Omega$ is the unique gapped ground state of a local charge conserving Hamiltonian. Let the Hamiltonian $H$ be of the form
\begin{equation*}
H = \sum_{X\subset\Lambda}\Phi_X,
\end{equation*}
where $\Phi_X\in\caA_X$ and satisfying
\begin{enumerate}
\item \emph{finite range condition}: There is $R<\infty$ such that $\Phi_X = 0$ if $\mathrm{diam}(X)>R$, 
\item  \emph{finite interaction strength}: There is $m<\infty$ such that $\norm{\Phi_X} \leq m$ for all $X$.
\end{enumerate}
Note that in the case of fermions, the `interaction' $\Phi$ also contains the hopping terms. The dynamics generated by $H$ satisfies a Lieb-Robinson bound. In particular, for any fixed $t$, the Schr\"odinger propagator $\ep{-\iu t H}$ is an almost local unitary in the sense of~(\ref{DefLocalUnitary}).
The assumption of charge conservation means that
\begin{equation*}
X\subset Y\quad\Longrightarrow\quad[\Phi_X,Q_Y] = 0,
\end{equation*}
from which we conclude that $[H,Q_Y]$ is supported in $(\partial Y)^R$, where $\partial Y=\{x\in\Lambda : \dist(x,Y) \leq 1 \text{ and } \dist(x,Y^c) \leq 1  \}$.

By adding a suitable constant to each nonzero $\Phi_X$, we assume that
\begin{equation*}
H\Omega = 0.
\end{equation*}
Let $g>0$ be a spectral gap of $H$, namely
\begin{equation*}
(0,g)\cap \mathrm{Spec}(H) = \emptyset.
\end{equation*}
\begin{prop}\label{prop:Ks}
Let $0$ be a non-degenerate eigenvalue of $H$, and let 
$\Omega$ be the corresponding eigenvector. Assume that the constants $R$, $m$ and $g$ can be chosen independent of $L$. Then Assumptions~(iv,v)  (before Theorem \ref{thm:main}) hold.
\end{prop}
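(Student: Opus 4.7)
The claim splits into (v) clustering, and (iv) existence of local charge-fluctuation operators $K_{\pm}$; I would address them separately.

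Assumption~(v) is an immediate consequence of the exponential clustering theorem for a unique gapped ground state of a lattice Hamiltonian with finite-range, bounded-strength interactions (Hastings--Koma, Nachtergaele--Sims). Since the parameters $R$, $m$, $g$ are by hypothesis independent of $L$, so are the resulting constants $c,C$ in~(\ref{A:clustering}); clustering in the $x_1$-direction is weaker than full graph-distance clustering and hence follows.

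For Assumption~(iv), the plan is to build $K_{\pm}$ from Hastings' gap-regularised inverse Liouvillian. Local charge conservation together with locality forces $[\Phi_X,Q_\Gamma]$ to vanish unless the bounded-diameter set $X$ straddles $\partial\Gamma$. For $L$ sufficiently large the two boundary strips $\partial_-^R$ and $\partial_+^R$ are disjoint, so
\begin{equation*}
[H,Q_\Gamma] \;=\; J_-+J_+,\qquad J_\pm\in\caA_{\partial_\pm^{R}},
\end{equation*}
and $\alpha_-+\alpha_+=0$ where $\alpha_\pm:=\langle\Omega,J_\pm\Omega\rangle$, since $H\Omega=0$. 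Choosing a pure-imaginary Hastings filter $W$ with $\widehat W(E)=\iu/E$ for $|E|\geq g$ and $|W(t)|=\caO(|t|^{-\infty})$, I set
\begin{equation*}
\widetilde J_\pm \;:=\; \iu\int W(t)\,\ep{\iu Ht}(J_\pm-\alpha_\pm)\ep{-\iu Ht}\,\dd t.
\end{equation*}
A direct spectral computation using $H\Omega=0$ and the gap gives the key identity $[H,\widetilde J_\pm]\Omega = (J_\pm-\alpha_\pm)\Omega$. I then define $K_\pm$ as a strictly local self-adjoint truncation of $\widetilde J_\pm$ to $\partial_\pm^{L/16}$, obtained by combining the Lieb--Robinson localisation of $\ep{\iu Ht}J_\pm\ep{-\iu Ht}$ for $|t|\lesssim L$ with the super-polynomial decay of $W(t)$ on $|t|\gtrsim L$; the bound $\|K_\pm\|\leq \|W\|_1\|J_\pm\|+\vsmall\leq C|\Lambda|$ is then immediate from $\|J_\pm\|\leq CL^{d-1}$.

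The variance bound follows from the spectral gap: for any self-adjoint $A$, using $H\Omega=0$ and $H\geq g$ on $\mathrm{ran}(1-|\Omega\rangle\langle\Omega|)$,
\begin{equation*}
\var_\Omega(A) \;=\; \|A\Omega-\langle\Omega,A\Omega\rangle\Omega\|^2 \;\leq\; g^{-2}\|[H,A]\Omega\|^2.
\end{equation*}
Applied to $A=Q-K_--K_+$, the identity above gives $[H,Q-\widetilde J_--\widetilde J_+]\Omega=(\alpha_-+\alpha_+)\Omega=0$, so the whole claim reduces to $\|[H,\widetilde J_\pm-K_\pm]\Omega\|=\vsmall$. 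This is where I anticipate the main obstacle: the naive estimate $\|[H,X]\|\leq \|H\|\,\|X\|$ loses a factor $\|H\|=\Theta(|\Lambda|)$ and is therefore too coarse. Instead one must exploit that the removed tail $\widetilde J_\pm-K_\pm$ is supported well outside $\partial_\pm^{L/16}$, so that only the $\caO(L^{d-1})$ interaction terms $\Phi_X$ crossing the truncation boundary contribute to the commutator; combined with the $\vsmall$-decay of the tail, the overall error remains $\vsmall$.
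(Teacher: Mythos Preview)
Your construction is essentially the paper's: both build $K_\pm$ by applying Hastings' gap-inverted Liouvillian (the map $\caI$ of \cite{OurMathAdiabatic}) to the boundary currents $J_\pm$ and then truncate to strict locality. The paper's identity $[Q,P]-\iu[\caI([Q,H]),P]=0$ is exactly your statement $[H,\widetilde J_\pm]\Omega=(J_\pm-\alpha_\pm)\Omega$, both of which say that $\Omega$ is an \emph{exact} eigenvector of $Q-\widetilde J_--\widetilde J_+$.

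Where you diverge is in the final step, and the obstacle you anticipate is self-inflicted. Your own inequality $\var_\Omega(A)\leq g^{-2}\|[H,A]\Omega\|^2$, applied to $A=Q-\widetilde J_--\widetilde J_+$, already gives $\var_\Omega(Q-\widetilde J_--\widetilde J_+)=0$. From there one needs only the \emph{norm} bound $\|\widetilde J_\pm-K_\pm\|=\vsmall$, which you have (Lieb--Robinson plus decay of $W$), to conclude directly that $\var_\Omega(Q-K_--K_+)=\vsmall$, since $\var_\Omega(A+E)\leq(\sqrt{\var_\Omega(A)}+2\|E\|)^2$. There is no need to push the commutator with $H$ through the truncation, and the delicate estimate on $\|[H,\widetilde J_\pm-K_\pm]\Omega\|$ never arises. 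This is precisely the route the paper takes: exact eigenvector of the untruncated object, then norm-small truncation error.
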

In other words, for a gapped ground state, Theorem \ref{thm:main} yields an index for any almost local charge conserving $U$ for which the ground state is an approximate eigenvector. 
Note that, just as in the previous section, the uniformity of bounds is the only connection between the objects for different $L$.

\begin{proof}
Assumption~(v), i.e.\ exponential clustering, was proved in~\cite{Hastings:2004go}, see also~\cite{BrunoLSM}. So we turn to the construction of $\K_\pm$. Charge conservation implies that $[Q,H]$ is a sum of two terms supported in the strips  $\partial^R_\pm$. We denote
\begin{equation*}
[Q,H] = J_- + J_+.
\end{equation*}
Using the map $\caI$ defined in Section~4 of~\cite{OurMathAdiabatic} (or~\cite{TeufelAd} in the case of fermions), we define
\begin{equation}\label{Ks}
\tilde K_\pm := \iu\caI(J_\pm).
\end{equation}
If $P$ is the projector on $\Omega$,
\begin{equation*}
[Q,P] - \iu [\caI([Q,H]),P] = 0,
\end{equation*}
see Proposition~4.1 of~\cite{OurMathAdiabatic}, and hence $\Omega$ is an exact eigenvector of $Q -  \tilde K_+ - \tilde K_-$. The almost locality of the map $\caI$ implies that one can find approximants $K_\pm \in \caA_{\partial_\pm^{L/16}} $ satisfying
$ \tilde K_+ = K_\pm + \caO(L^{-\infty})$. The boundedness of the map $\caI$ yields $\norm{K_{\pm}} = \caO(L^d)$.  Hence Assumption~(iv) of Theorem~\ref{thm:main} is indeed satisfied.
\end{proof}

\subsection{Unitary $U$ connected to identity.} \label{sec: trivial unitaries}

In most of our examples the unitary $U$ is itself topologically connected to the identity, leading to a slight simplification. 
The natural framework to discuss this is to assume that there is a family $\{G(s): s\in [0,1]\}$ of local Hamiltonians in the sense of Section \ref{sec: case of gapped}: $G(s)=\sum_{X\subset\Lambda} \Psi_X(s)$ with corresponding range $R$ and interaction strength $m$ that are independent of both $s$ and $L$.
We also assume local charge conservation in the sense of Section \ref{sec: case of gapped}.
The unitary $U$ is then given as the time-$1$ solution $U(1)$ of 
\begin{equation} \label{eq: time dependent g}
\iu\partial_s U(s) = G(s) U(s),\qquad  U(0)=1.
\end{equation}
Let now $U_{\pm}(s)$ be the solution to \eqref{eq: time dependent g} with $G(s)$ replaced by $G_{\pm}(s)=\sum_{X \subset \partial_{\pm}^{L/12}} \Phi_X(s)$, and let 
\begin{equation}\label{T-connectedto1}
T_{\pm} =   U^*_{\pm}   Q U_{\pm}-Q,\qquad U_{\pm}:= U_{\pm}(1).
\end{equation}
Since $U_{\pm}$ are by construction supported on $\partial_{\pm}^{L/12}$ and recalling the definition~(\ref{Decomp of Q}), this is also equal to $U^*_{\pm}   Q_\pm U_{\pm}-Q_\pm$, making the strict locality of $T_\pm$ clear.
\begin{prop}\label{prop:Nu if connected to 1}
The operators $T_{\pm}$ defined above indeed satisfy
$$
U^* QU-Q= T_++T_- + \caO(L^{-\infty}),\qquad T_{\pm} \in \caA_{S_{\pm}}.
$$
\end{prop}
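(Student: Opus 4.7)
The plan is to approximate $U$ by the boundary-only product $V := U_-U_+$, exploit that the two evolutions differ by a bulk unitary whose generator exactly commutes with $Q$, and identify $V^*QV - Q$ as precisely $T_-+T_+$.

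Strict locality $T_\pm \in \caA_{S_\pm}$ is immediate: since $G_\pm(s) \in \caA_{\partial_\pm^{L/12}}$ at every time, its time-one solution $U_\pm$ lies in $\caA_{\partial_\pm^{L/12}}\subset\caA_{S_\pm}$, and hence so does $T_\pm = U_\pm^*QU_\pm - Q$.

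The first substantive step is to decompose $G(s) = G_-(s) + G_+(s) + G_m(s)$ with $G_m := G - G_- - G_+$. Each summand $\Psi_X$ of $G_m$ has $X\not\subset\partial_-^{L/12}$ and $X\not\subset\partial_+^{L/12}$; combined with the finite range $\mathrm{diam}(X)\leq R$ and the fact that $\partial_-^{L/12}$ and $\partial_+^{L/12}$ are separated by $\Theta(L)$, this forces $X$ to lie in a single connected component of $\Lambda\setminus(\partial_-^{L/12-R}\cup\partial_+^{L/12-R})$, hence entirely in $\Gamma$ or entirely in $\Gamma^c$. Local charge conservation then gives $[\Psi_X,Q]=0$ for each such $X$, and hence $[G_m(s),Q]=0$ identically. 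Setting $J_\pm(s):=[G_\pm(s),Q]\in\caA_{\partial_\pm^R}$ yields $[G(s),Q]=J_-(s)+J_+(s)$. Moreover, since $G_-(s)$ and $G_+(s)$ have disjoint supports at all times, $V(s):=U_-(s)U_+(s)$ solves~\eqref{eq: time dependent g} with generator $G_-+G_+$. A direct computation, using $[T_-,U_+(1)]=0$ by disjointness of supports, then gives
\begin{equation*}
V^*QV \;=\; U_+^*U_-^*QU_-U_+ \;=\; U_+^*(Q+T_-)U_+ \;=\; Q + T_- + T_+.
\end{equation*}

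The second step compares $U^*QU$ with $V^*QV$. Setting $\alpha_s^U(\cdot):=U(s)^*(\cdot)U(s)$ and $\alpha_s^V$ analogously, the fundamental theorem of calculus together with the identity $[G(s),Q] = [G_-(s)+G_+(s),Q] = J_-(s)+J_+(s)$ gives
\begin{equation*}
U^*QU - V^*QV \;=\; \iu\int_0^1 \bigl(\alpha_s^U - \alpha_s^V\bigr)\bigl(J_-(s)+J_+(s)\bigr)\,ds,
\end{equation*}
and a standard Duhamel identity further rewrites each integrand as
\begin{equation*}
(\alpha_s^V - \alpha_s^U)(j) \;=\; -\iu\int_0^s \alpha_r^V\bigl([G_m(r),\,\alpha_{r,s}^U(j)]\bigr)\,dr.
\end{equation*}
It then suffices to control this commutator when $j$ ranges over the local pieces of $J_\pm(s)$ supported in $\partial_\pm^R$. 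A Lieb--Robinson bound for the evolution generated by $G$ ensures that $\alpha_{r,s}^U(j)$ is quasi-local in $\partial_\pm^{R+v_{\mathrm{LR}}}$ for $0\leq r\leq s\leq 1$, while every summand of $G_m(r)$ is supported outside $\partial_\pm^{L/12-R}$. These two regions are separated by $\Theta(L)$, so a resummed Lieb--Robinson commutator estimate gives an integrand of size $\caO(L^{-\infty})$.

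The main obstacle is this last estimate: the exponential decay over distance $\Theta(L)$ must absorb both the $\caO(L^d)$ terms in $G_m(r)$ and the $\caO(L^{d-1})$ local pieces of $J_\pm(s)$. This is routine given the standard locality machinery, but requires care in choosing the quasi-locality radii so that the Lieb--Robinson prefactor stays bounded while the effective exponent remains of order $L$.
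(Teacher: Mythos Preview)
Your proof is correct and follows essentially the same route as the paper: local charge conservation confines $[G(s),Q]$ to $\partial_\pm^R$, and a Lieb--Robinson bound together with Duhamel's principle localizes the dynamics near each boundary. The only cosmetic difference is that you introduce the product $V=U_-U_+$ and perform a single Duhamel comparison of $U$ against $V$, whereas the paper writes $U^*QU-Q=-\iu\int_0^1 U^*(s)[G(s),Q]U(s)\,\dd s$, splits the integrand as $J_-+J_+$, and replaces $U(s)$ by $U_\pm(s)$ in two separate Duhamel steps; the underlying estimates are identical.
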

\begin{proof}
We write
\begin{align*}
U^* QU-Q &= -\iu\int_0^1  U^* (s) [G(s), Q] U(s) \dd s  \\
& =  -\iu\int_0^1  U^* (s) [G_{-}(s) , Q] U(s) \dd s 
 -\iu\int_0^1  U^* (s) [G_{+}(s) , Q] U(s)  \dd s
\end{align*}
where the second equality is by local charge conservation and the finite range condition. The commutators are supported in $\partial_{\pm}^{R}$ respectively. We now invoke the Lieb-Robinson bound for the dynamics implemented by $U(s)$ combined with Duhamel's principle to replace $U(s)$ by $U_{\pm}(s)$, up to $\caO(L^{-\infty})$. The resulting expression is $T_{+}+T_-+ \caO(L^{-\infty}) $, as required. The constraints \eqref{eq: choice t} are satisfied. Indeed, the spectrum of  $Q_-+T_-= U^*_{-}   Q_- U_{-}$ equals, by unitary invariance, $\mathrm{Spec} (Q_-) \subset \bbZ$.
\end{proof}

Incidentally, we need a slight generalization of the above proposition, allowing for terms of arbitrary support in $G(s)$ but such that the Lieb-Robinson bound continues to hold.
\begin{cor}\label{cor: u connected}
Let the family $G(s)=\sum_{X\subset\Lambda} \Phi_X(s)$ be such that for each $R<\infty$,  the truncated  $G^{(R)}(s)=\sum_{X\subset\Lambda, \mathrm{diam}(X) <R} \Phi_X(s)$ satisfies all assumptions above and 
$$
\frac{1}{|\Lambda|}\sum_{\mathrm{diam}(X) \geq R} \norm{ \Phi_X(s)} = \caO(R^{-\infty}).
$$
Then the conclusion of Proposition \ref{prop:Nu if connected to 1} holds without change.
\end{cor}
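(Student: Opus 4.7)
The plan is to mimic the proof of Proposition \ref{prop:Nu if connected to 1}, replacing its clean finite-range splittings by approximate ones whose errors are controlled by the decay hypothesis together with the Lieb-Robinson bound inherited from $G(s)$.

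Starting from $U^*QU - Q = -\iu\int_0^1 U^*(s)[G(s),Q]U(s)\,ds$, decompose $G(s) = G_-(s) + G_+(s) + G_{\mathrm{rest}}(s)$ where $G_{\mathrm{rest}}(s) = \sum_{X\not\subset\partial_-^{L/12}\cup\partial_+^{L/12}}\Phi_X(s)$. By local charge conservation $[\Phi_X(s),Q]$ vanishes unless $X$ straddles $\partial_-$ or $\partial_+$, and any such straddling $X$ not contained in $\partial_\pm^{L/12}$ must have $\mathrm{diam}(X) > L/12$. Combining $\|[\Phi_X,Q]\|\leq 2\|\Phi_X\|\,|X|$ with $|X|\leq C\,\mathrm{diam}(X)^d$ and an Abel summation against the decay hypothesis gives $\|[G_{\mathrm{rest}}(s),Q]\| = \vsmall$, so this piece contributes $\vsmall$ to the integral.

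For each $\pm$, I want to replace $U$ by $U_\pm$ inside $\int_0^1 U^*(s)[G_\pm(s),Q]U(s)\,ds$, turning it into $-\iu T_\pm$. In the finite-range proof of Proposition \ref{prop:Nu if connected to 1} this works because $[G_\pm(s),Q]$ is supported in $\partial_\pm^R$ with $R\ll L$. Here $G_\pm(s)$ may contain long-range terms within $\partial_\pm^{L/12}$, so I split $[G_\pm(s),Q] = A_\pm(s) + A'_\pm(s)$ according to whether $\mathrm{diam}(X) < L^{1/2}$ or $\mathrm{diam}(X)\geq L^{1/2}$: then $A_\pm(s)$ is supported in $\partial_\pm^{L^{1/2}}$, and $\|A'_\pm(s)\| = \vsmall$ by the same Abel-summation argument as above. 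The standard Duhamel-based comparison, combined with the Lieb-Robinson bound for $G$ (uniform in $L$ by hypothesis), bounds $\|U^*(s)A_\pm U(s) - U_\pm^*(s)A_\pm U_\pm(s)\|$ by $\|A_\pm\|$ times an integral of $\sum_{Y\not\subset\partial_\pm^{L/12}}\|\Phi_Y(r)\|$ weighted by the Lieb-Robinson envelope evaluated at $\dist(Y,\partial_\pm^{L^{1/2}})$. Splitting this $Y$-sum at $\mathrm{diam}(Y) = L^{1/2}$: small-diameter $Y$'s sit at distance $\geq L/12 - 2L^{1/2} \sim L/12$ from $\partial_\pm^{L^{1/2}}$ and contribute $\vsmall$ via the Lieb-Robinson envelope; large-diameter $Y$'s contribute $\vsmall$ via the decay hypothesis. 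The prefactor $\|A_\pm\| \leq C|\Lambda| = L^d$ is absorbed as an irrelevant polynomial. Reintroducing $A'_\pm$ costs $\vsmall$ on both sides, and we conclude $U^*QU - Q = T_+ + T_- + \vsmall$.

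Finally, $T_\pm\in\caA_{S_\pm}$ and the constraint \eqref{eq: choice t} hold as in Proposition \ref{prop:Nu if connected to 1}: $U_\pm$ is strictly supported in $\partial_\pm^{L/12}\subset S_\pm$, so $T_\pm = U_\pm^*Q_{\Gamma\cap\partial_\pm^{L/12}}U_\pm - Q_{\Gamma\cap\partial_\pm^{L/12}}\in\caA_{S_\pm}$; and $Q + T_- = U_-^*Q_{\Gamma\cap\partial_-^{L/12}}U_- + Q_{\Gamma\setminus\partial_-^{L/12}}$ has integer spectrum because the two commuting summands do, the first being unitarily conjugate to the integer-spectrum operator $Q_{\Gamma\cap\partial_-^{L/12}}$. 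The main obstacle is precisely the short/long diameter split in the middle step: without it one cannot localize $[G_\pm(s),Q]$ well enough for Lieb-Robinson to produce a macroscopic separation, and the polynomial scale $L^{1/2}$ threads the needle between keeping the Lieb-Robinson distance $\sim L$ and making the tail negligible.
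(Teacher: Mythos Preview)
The paper omits the proof of this corollary entirely, labelling it straightforward; your approach---carrying the proof of Proposition~\ref{prop:Nu if connected to 1} through with approximate splittings controlled by the tail hypothesis, and inserting an intermediate scale $L^{1/2}$ so that the localized commutator sits at macroscopic distance from the region where $G$ and $G_\pm$ differ---is exactly the intended route and is correct in substance.

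Two small remarks. First, your $G_{\mathrm{rest}}$ should collect the terms with $X\not\subset\partial_-^{L/12}$ \emph{and} $X\not\subset\partial_+^{L/12}$, not $X\not\subset\partial_-^{L/12}\cup\partial_+^{L/12}$; the discrepancy consists only of sets of diameter at least $L/3$ and is harmless. Second, your appeal to a ``Lieb--Robinson bound for $G$, uniform in $L$ by hypothesis'' is slightly optimistic: the corollary's volume-averaged tail condition $\tfrac{1}{|\Lambda|}\sum_{\mathrm{diam}(X)\geq R}\|\Phi_X\|=\caO(R^{-\infty})$ together with $\|\Phi_X\|\leq m$ is a shade weaker than the usual pointwise condition $\sup_x\sum_{X\ni x}\|\Phi_X\|<\infty$ underlying Lieb--Robinson estimates. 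A cleaner way to organize the middle step, which avoids this issue, is to first replace $G$ by $G^{(R)}$ with $R=L^{1/2}$ throughout (the replacement of $U$ by $U^{(R)}$ and of $U_\pm$ by $U_\pm^{(R)}$ costs only $\int_0^1\|G-G^{(R)}\|\,ds=|\Lambda|\,\caO(R^{-\infty})=\caO(L^{-\infty})$ by Duhamel), and then run your argument for the genuinely finite-range $G^{(R)}$; in the paper's actual application, the quasi-adiabatic generator $\caI(\dot H)$, the stronger pointwise decay holds anyway.
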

We omit the straightforward proof.  

%%%%%%%%%%%%%%%%%%%%%%%%%%%%%%%%%%%%%%%%%%%%%%%%%%%%%%%%%%%%%%%%%%%%%%%%%%%%%%%%%%%%%%%%%%%%%%%%%%%%%
%%%%%%%%%%%%%%%%%%%%%%%%%%%%%%%%%%%%%%%%%%%%%%%%%
%%%%%%%%%%%%%%%%%%%%%%%%%%%%%%%%%%%%%%%%%%%%%%%%%

\section{Examples}\label{sec:Examples}

%%%%%%%%%%%%

%%%%%%%%%%%%%%%%%%%%%%%%%%%%%%%%%%%%%%%%%%%%%%%%%%%%%%%%%%%%%%%%%%%%%%%%%%%%%%%%%%%%%%%%%%%%%%%%%%%%%
%%%%%%%%%%%%%%%%%%%%%%%%%%%%%%%%%%%%%%%%%%%%%%%%%
%%%%%%%%%%%%%%%%%%%%%%%%%%%%%%%%%%%%%%%%%%%%%%%%%

Before moving on to the proof of the main theorem, we now illustrate its applications sketched in the introduction, in details. We will not repeat the setting and refer implicitly to the concepts and notations of Section~\ref{sec:Index}. In each case, it suffices to provide a unitary and a state and to check the invariance of the latter under the action of the former. If the state is not a gapped ground state, then locality of charge fluctuations must be proved independently, too. In all cases, the theorem then provides an integer-valued index. Although it always expresses the charge transported across a hyperplane of codimension $1$, the physical interpretation of the unitary depends on the system. We show that in a quantum Hall setting, the index is equal to the adiabatic curvature, and hence to the Hall conductance.

Although the index is well-defined for any choice of decomposition $U\str Q U - Q = T_- + T_+$ satisfying the constraint \eqref{eq: choice t}, there is usually a natural choice for it. We shall exhibit $T_-$ in each example.
\subsection{Example 1: Index of projections}\label{sec: example one}
In this example, we strive for the simplest setup and we assume $d=1$. For a system of non-interacting fermions with one-particle Hamiltonian $h$ acting on $l^2(\Lambda)$, the state is the gauge-invariant quasi-free state corresponding to Fermi projection $p=1(h\leq \mu)$.
 If the chemical potential $\mu$ lies in a spectral gap, then the state is clustering. The same holds in a disordered setting, provided $\mu$ is in a mobility gap, see~\cite{AizenmanGraf}.
The one-particle $u$ is any unitary that commutes with the Fermi projection. Its locality is expressed in terms of superpolynomial decay of the matrix elements $\vert u_{x,y}\vert$ in $\vert x - y\vert$. If we assume that all fermions carry unit charge, then
 $Q$ is the number of fermions on the half torus $\Gamma$, i.e.\ the second quantization of the indicator function $q=1_{\Gamma}$. It then follows from the decay of $\vert u_{x,y}\vert$ that the matrix elements of $t= u\str q u - q$ have superpolynomial decay in the distance to the boundary of $\Gamma$, which is the one-particle version of local charge conservation.
 
We compute 
$$
UQU^*-Q=
\Gamma(u)\str\dd\Gamma(q)\Gamma(u) - \dd\Gamma(q) = \dd\Gamma(t),\qquad t= u\str q u - q.
$$ 
In contrast to the general many-body setting, there is a canonical way of restricting operators to a  region $Z$ namely, by projecting $\caL(\ell^2(\Gamma))\to \caL(\ell^2(X)): m \mapsto m_{X}=1_Xm 1_X$. Hence we have the splitting $t=t_{S_{-}}+t_{S_{+}}+\error$ and so we set  
\begin{equation*}
T_- = \dd\Gamma(t_{S_{-}})= \dd \Gamma((u\str q u - q)_{S_-}).
\end{equation*}
Let us now determine the spectrum of $\dd\Gamma(q_{S_-}+t_{S_{-}})=\dd\Gamma((u^*qu)_{S_-})$.
We write
$$
u^*qu = (u^*qu)_{S_-}+q_{S_m} +(u^*qu)_{S_+}+\error.
$$ 
Therefore, since both $u^*qu$ and $q_m$ have spectrum $\{0,1\}$, the operators $(u^*qu)_{S_{\pm}}$ have spectrum in $\{0,1\}$ as well. In second quantization, this means that $\mathrm{Spec}(Q_{\pm}+T_{\pm}) \subset \bbN$ and hence \eqref{eq: choice t} is satisfied.    Theorem~\ref{thm:main} reads hence
$$
\langle\Omega,T_-\Omega\rangle=   \tr(p (u\str q u - q)_{S_-})   \in \fatlattice.
$$

Let us now connect this to the infinite-volume setting and to the `index of projections', as introduced in \cite{ASSIndex}. The finite-volume version of the index of projections is discussed in Appendix~C of~\cite{KitaevHoneycomb}. There, the index is called the `flow of a unitary' and the concrete example of the shift by $m\in\bbZ$ on $l^2(\bbZ)$ shows that it can take all possible integer values.

\subsubsection{Infinite-volume setting}
We now consider from the one-particle Hilbert space $\ell^2(\bbZ)$ and we imagine the finite-volume operators $h_L,u_L$ discussed above (where we did not indicate the $L$-dependence explicitly) to act on $\ell^2(\bbZ)$ by the natural embedding $\ell^2(\Lambda)\to \ell^2(\bbZ)$.
As already explained, our setup does not assume nor imply any relation between different volumes $L$, except for the uniformity of some bounds. However, in this section only, we do want to consider an infinite volume setup and so we assume that $h_L,u_L$ converge to infinite-volume operators $h,u \in \caL(\ell^2(\bbZ))$ in the sense that
$$
h_L\phi\to h\phi,\qquad u_L\phi\to u\phi,
$$
for any compactly supported $\phi \in \ell^2(\bbZ)$. We moreover assume (although some of this actually follows from the construction above) the following:
\begin{enumerate}
\item $h$ has range $R<\infty$, i.e.\ it is a sum of terms acting each on at most $R$ consecutive lattice sites. The terms are uniformy bounded so $\norm{h}\leq C$
\item The unitary $u$ is almost local in the sense that $u_{x,y}=\caO(|x-y|^{-\infty})$. 
\item  The chemical potential $\mu$  lies in a spectral gap of $h$.
\item  The Fermi projection $p=1(h\leq \mu)$ commutes with $u$: $[p,u]=0$. 
\end{enumerate}
Finally, we choose $q=1_\bbN$, corresponding to the picture that we let the boundary of $\Gamma$ at $L/2$ disappear at infinity.  By the locality of $u$, the operator $u^*qu-q$ is compact, in particular it is trace-class, and  it follows that the finite-volume expression $\tr (p_L({u_L}^*1_\Gamma u_L-1_\Gamma)_{S-})$ converges to the infinite-volume expression
$
\tr (p(u^*qu-q)).
$
Note that the disappearance of one boundary comes at the price of an additional, necessary, subtlety: the operators $u^*qu$ and $q$ are individually not trace-class, even though their difference is. 
Hence, the upshot is that our result implies that 
\begin{equation*}
\tr(p (u\str q u - q)) \in \bbZ.
\end{equation*}
As $[p,u]= 0$, this is the expression proposed in~(\ref{FermionsConv}). Since $p,q$ are not commuting, the operator $pqp$ is not a projection. Therefore, it remains to prove the claim that $pqp$ can be deformed to a projection without changing the value of the index. 
\begin{lemma}
There exists a $\chi\in C^\infty(\bbR)$ such that $\chi(x) = 0$ for $x\leq 0$, $\chi(x) = 1$ for $x\geq 1$ and such that $\chi (p q_\V p) - p q_\V p \in\caI_1$. In particular,
\begin{equation*}
\tr(u\str p q_\V p u - p q_\V p) = \mathrm{Index}( \chi(p q_\V p); u) \in\bbZ,
\end{equation*}
whenever the operator on the left-hand side is in $\caI_1$.
\end{lemma}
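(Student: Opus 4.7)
The plan is to show that $A := pqp$ is a ``near-projection'' modulo trace class, to smooth it to $\chi(A)$ retaining this property, and then to reduce to the classical ASS index of a genuine projection.

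First, the spectral gap of $h$ yields Combes-Thomas decay $|p(x,y)| \le Ce^{-c|x-y|}$, so the kernel of $[p,q]$ (with $q = 1_\bbN$) is supported on pairs $(x,y)$ straddling the origin; expanding $[p,q]$ in the rank-one basis $|x\rangle\langle y|$ gives $\|[p,q]\|_1 \le \sum_{x,y}|[p,q](x,y)| \le 2\sum_{k\geq 1} k\,Ce^{-ck} < \infty$, hence $[p,q] \in \caI_1$. I then establish the algebraic identity
\begin{equation*}
pqp(1-pqp) \;=\; pqp \cdot p(1-q)p \;=\; -\,pq\,[p,q]\,p,
\end{equation*}
where the first equality uses $p - pqp = p(1-q)p$ and the second expands $p(1-q) = (1-q)p - [p,q]$ together with $q(1-q)=0$. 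The ideal property of $\caI_1$ then places $A(1-A) \in \caI_1$.

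For any smooth $\chi$ with $\chi = 0$ on $(-\infty,0]$ and $\chi = 1$ on $[1,\infty)$, the difference $\chi(x) - x$ is smooth on $[0,1]$ and vanishes at both endpoints, so the smooth factor theorem supplies $r \in C^\infty([0,1])$ with $\chi(x) - x = x(1-x)\,r(x)$. Since $\mathrm{Spec}(A) \subset [0,1]$, functional calculus produces
\begin{equation*}
\chi(A) - A \;=\; A(1-A)\,r(A) \;\in\; \caI_1,
\end{equation*}
which is the first claim. To deduce the trace identity, I decompose $u^*Au - A$ as $(u^*\chi(A)u - \chi(A))$ plus $u^*(A-\chi(A))u - (A-\chi(A))$; the second group lies in $\caI_1$ and has zero trace by cyclicity, so $\tr(u^*Au - A) = \tr(u^*\chi(A)u - \chi(A)) =: \mathrm{Index}(\chi(A);u)$.

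Integrality is obtained by reducing to the genuine spectral projection $P := 1_{[1/2,\infty)}(A)$. The bounded Borel function $\tilde r(x) := (1_{[1/2,\infty)}(x) - \chi(x))/(x(1-x))$ extends to a bounded function on $[0,1]$ (both numerator and denominator vanish to first order at the endpoints, while $\tilde r$ has only a finite jump at $1/2$), so by functional calculus $P - \chi(A) = A(1-A)\,\tilde r(A) \in \caI_1$. The same cyclicity argument then gives $\tr(u^*\chi(A)u - \chi(A)) = \tr(u^*Pu - P)$, where the right-hand side is the classical ASS Fredholm index of the projection $P$, an integer by [ASSIndex]. The main obstacle, largely technical, is the functional-calculus reduction from the smooth $\chi(A)$ to the discontinuous spectral projection $P$, which requires verifying that the Borel function $\tilde r$ is uniformly bounded on $\mathrm{Spec}(A)$ so that the ideal argument applies; once in hand, integrality is a direct consequence of the classical theory.
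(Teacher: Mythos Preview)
Your proof is correct but follows a genuinely different route from the paper's. The paper exploits the finite-range structure of $h$: writing $h=h_{\mathrm d}+h_{\mathrm o}$ with $h_{\mathrm o}$ finite rank and $[h_{\mathrm d},q]=0$, a Duhamel/Fourier argument gives $\theta(h)-\theta(h_{\mathrm d})\in\caI_1$, and then two applications of Weyl's theorem on compact perturbations show that $pqp$ has an honest spectral gap $I\subset(0,1)$. The smooth $\chi$ is then chosen to interpolate \emph{inside} $I$, so that $\chi(pqp)$ is itself a projection and the ASS index applies directly. Your argument, by contrast, uses only $[p,q]\in\caI_1$: the identity $A(1-A)=-pq[p,q]p$ together with the factorization $\chi(x)-x=x(1-x)r(x)$ gives $\chi(A)-A\in\caI_1$ for \emph{every} admissible smooth $\chi$, with no spectral-gap input, no Weyl theorem, and no explicit reference to $h$ beyond the off-diagonal decay of $p$. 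The trade-off is that your $\chi(A)$ is not a projection, so $\mathrm{Index}(\chi(A);u)$ in the literal ASS sense is not yet available; you recover integrality by the additional Borel functional-calculus step passing to the spectral projection $P=1_{[1/2,\infty)}(A)$. One minor inaccuracy: in your parenthetical about $\tilde r$, the numerator $1_{[1/2,\infty)}(x)-\chi(x)$ vanishes to \emph{infinite} order at $0$ and $1$ (since a $C^\infty$ function that is constant on a half-line is flat at the endpoint), not to first order---this only strengthens the boundedness claim.
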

\begin{proof}
Since $\mu$ lies in the gap and $\norm{h} < \infty$, the spectral projection $p$ can be obtained from a compactly supported $C^\infty$-bump function~$\theta$, namely
\begin{equation*}
p = \theta(h).
\end{equation*}
We decompose $h$ as
\begin{equation*}
h = h_\mathrm{d} + h_\mathrm{o}, \qquad h_\mathrm{d}= qhq+(1-q)h(1-q),
\end{equation*}
and note that $h_\mathrm{o}$ is a finite-rank operator composed of finite number of interaction terms connecting $x<0$ to $x\geq 0$. We now claim that
\begin{equation} \label{eq: decomp of chi statement}
\theta(h) - \theta(h_\mathrm{d}) \in\caI_1.
\end{equation}
Indeed, the Fourier transform $\hat \theta(t)$ is smooth and $\caO(|t|^{-\infty})$. By the spectral theorem,
\begin{equation*}
\theta(h) - \theta(h_\mathrm{d})
= \frac{\iu}{\sqrt{2\pi}}\int_{-\infty}^{\infty} \hat \theta(t) \int_0^t \ep{\iu (t-s) h}h_\mathrm{o}\ep{\iu s h_\mathrm{d}}\dd s \dd t.
\end{equation*}
The claim~(\ref{eq: decomp of chi statement}) follows since $h_\mathrm{o}$ has finite rank and the integrals are convergent in the trace norm. Therefore,
\begin{equation}\label{compact}
pqp - \theta(h_\mathrm{d}) q \theta(h_\mathrm{d}) \in\caI_1.
\end{equation}
Since $\theta(h) = p$ is a projection, it follows by Weyl's theorem on compact perturbations that the spectrum of $\theta(h_\mathrm{d})$ has a gap in $[0,1]$. Moreover, $[\theta(h_\mathrm{d}),q] = 0$ by construction, so that $\theta(h_\mathrm{d}) q \theta(h_\mathrm{d})$ also has a gap. Using~(\ref{compact}) and Weyl's theorem again, $pqp$ as well has a gap, say the interval $I\subset [0,1]$. It remains to pick a smooth function $\chi$ interpolating from $0$ to $1$ within~$I$ to conclude that
\begin{enumerate}
\item $\chi(pqp)$ is a projection, and that
\item  $\chi(pqp)-pqp \in\caI_1$ by the same reasoning as that leading to~(\ref{eq: decomp of chi statement}). 
\end{enumerate}
We conclude by noting that perturbations of $pqp$ that are trace-class do not contribute to $\tr(u\str p q_\V p u - p q_\V p)$.
\end{proof}
%

%%%%
%%%%
%%%%
%%%%
\subsection{Example 2: The Lieb-Schulz-Mattis theorem} \label{sec: example two}
%%%%
%%%%
%%%%
%%%%
%%%%
%%%%
%%%%
%%%%
We now turn to interacting quantum spin systems and explain how the Lieb-Schulz-Mattis (LSM) theorem, in any dimension, is a corollary of our index theorem.
As already said, $U$ implements the translation $x_1\mapsto x_1+1$ and we assume that the translation-invariant Hamiltonian has a non-degenerate gapped ground state $\Omega=U\Omega$. Let us write $[a,b]=\{a \leq  x_1 \leq b \}$ and $[a]=[a,a]$, so that $\Gamma=[0,L/2-1]$. Then 
$$ U\str Q_\V U - Q_\V = -Q_{[0]}+Q_{[L/2]}, $$
and the natural definition of $T_\pm$ is 
$$
T_-= -Q_{[0]},\qquad T_+=  Q_{[L/2]}.
$$
With this choice, $Q_-+T_-= Q_{\Gamma\cap(S_- \setminus [0])}$,  whose spectrum is obviously integer, hence \eqref{eq: choice t} is satisfied.
Theorem \ref{thm:main} now yields 
$$
\langle \Omega, Q_{[0]} \Omega \rangle \in  \bbZ_{\scriptscriptstyle{(\caO(L^{-\infty}))}}
$$  
which is the desired result. 

For a spin chain, this is immediately the quantization of the filling factor. The higher dimensional case is slightly more involved as it requires a non-commensurability condition on the number of sites in each direction, see the discussion in~\cite{oshikawa2000commensurability}. Similarly, the `original' LSM Theorem on the non-existence of unique gapped ground states for half-odd-integer spin systems is easily recovered by considering as the charge a $\mathrm{U}(1)$-subgroup of $\mathrm{SU}(2)$, and a system having an odd number of sites. This connection is due to \cite{HastingsLesHouches}.

%%%%
%%%%%%%%%%%%
\subsection{Example 4: Thouless pumps} \label{sec: example four}
%%%%
By a Thouless pump, we mean a system adiabatically undergoing a cyclic change. The smooth family of local Hamiltonians along the cycle is denoted $H(s)$ with $H(0) = H(1)$. If $H(s)$ has a unique gapped ground state $\Omega(s)$ for all $s\in[0,1]$, then the quasi-adiabatic generator
\begin{equation*}
G(s) := \caI(\dot H(s)) = \int_{-\infty}^{+\infty} W(t) \ep{\iu t H(s)}  \dot H(s) \ep{-\iu t H(s)}\dd t,\qquad \dot H(s) = \frac{\dd}{\dd s}H(s).
\end{equation*}
introduced in~\cite{Hastings:2004go} yields a family of unitaries $U(s)$ such that $\Omega(s) = U(s)\Omega(0)$, provided $W$ satisfies certain properties~\cite{HastingsWen,Sven}. In particular $\Omega(0) = \Omega(1) = U(1)\Omega(0)$, so the index theorem applies to $U = U(1)$ and $\Omega = \Omega(0)$. The locality of $U$, Assumption~(i), follows from the Lieb-Robinson bound and decay properties of $W$. If every $H(s)$ is a local Hamiltonian conserving charge, see Section~\ref{sec: case of gapped}, then $G(s)$ satisfies the assumptions of Corollary~\ref{cor: u connected}, which provides explicit $T_\pm$ satisfying the constraint~\eqref{eq: choice t}. It is this application which requires the corollary on top of Proposition~\ref{prop:Nu if connected to 1}.

The interpretation of our index theorem in this case is now clear: $\langle \Omega, T_- \Omega\rangle$ is the average charge transported across the line $\partial_-$ in one adiabatic cycle, and it is quantized $\langle \Omega, T_- \Omega\rangle \in \fatlattice$, as was first pointed out in a noninteracting setting in~\cite{Thouless83}.

Incidentally, the unitary considered here remains a useful tool even in other cases, for example when the spectral gap may close along the cycle as demonstrated e.g.\ in~\cite{HastingsLSM,HastingsMichalakis}.

%%%%
%%%%%%%%%%%%
\subsection{Example 3: Hall conductance} \label{sec: example three}
%%%%
%%%%
%%%%
%%%%
%%%%

Here we take $d=2$, a unique gapped ground state $\Omega$ and the unitary $U$ corresponds to threading of a unit of flux along $x_1$. To describe this, we introduce some additional notation. We recall that notions like $\Gamma, S_{\pm}, K_{\pm},\ldots$ were defined based on restrictions on the coordinate $x_1$.
We now define the analogous notions based on the coordinate $x_2$ and we make this explicit by  endowing these symbols with the superscripts $(1), (2)$. For example
$$\Gamma^{(i)} = \{ 0 \leq x_i < \tfrac{L}{2}\},\qquad  S^{(i)}_-=\{ -\tfrac{L}{8} \leq x_i \leq \tfrac{L}{8} \}, \qquad
Q^{(i)}= Q_{\Gamma^{(i)}}, $$
for $i=1,2$. Then, we take $U$ to be 
$$
U = \ep{- 2 \pi \iu (Q^{(2)} - K^{(2)}_- )}= \ep{- 2 \pi \iu (Q^{(2)}_- - K^{(2)}_- )},
$$
where $K^{(i)}_\pm$ are provided by Proposition~\ref{prop:Ks}. The second equality is because of $Q^{(2)}_m, Q^{(2)}_+$, defined as in~(\ref{Decomp of Q}), have integer spectrum and commute with $Q^{(2)}_- - K^{(2)}_-$.

We note that $U$ is connected to the identity in the sense of Section~\ref{sec: trivial unitaries} with a time independent generator 
\begin{equation}\label{Hall G}
G = (Q^{(2)}_- - K^{(2)}_-).
\end{equation}
Hence there is again a natural choice for $T_-$ satisfying \eqref{eq: choice t}. We further find that $\var_\Omega(U)=\caO(L^{-\infty})$: this is physically clear since $U$  implements the threading of a unit of flux (see \ref{sec: flux threading}, also \cite{OurQHE} or~\cite{HastingsMichalakis}), while we shall see in Section~\ref{sec:Proofs} that it is a mathematical fact that follows from the clustering of $\Omega$ and the property of local charge fluctuations. 

All assumptions of Theorem \ref{thm:main} are satisfied. In fact, technically speaking, this setup is an instance of a Thouless pump described in Section~\ref{sec: example four}. As discussed there, the charge transported across the fiducial line is an integer. It remains to show that the index equals $2\pi$ times the Hall conductance to conclude that the Hall conductance is quantized in this many-body setting. We do this here starting from the following expression for the Hall conductance:
\begin{equation}\label{curvature}
 \kappa := \iu \langle \Omega, [K^{(1)}_-, K^{(2)}_-] \Omega \rangle,
\end{equation}
which is derived from the well-known formula $\iu\Tr(P\mathrm dP\wedge \mathrm dP)$ for the adiabatic curvature, see~\cite{AvronSeiler85,OurQHE}.
\begin{thm}\label{thm: hall}
Let $\kappa$ and $\Omega, T_-$ be defined above. Then
\begin{equation*}
2 \pi \kappa - \langle\Omega, T_- \Omega \rangle = \caO(L^{-\infty}).
\end{equation*}
In particular $2\pi\kappa \in \fatlattice$.
\end{thm}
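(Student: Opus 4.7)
The plan is to compute $\langle\Omega, T_-\Omega\rangle$ by expanding $T_-$ via Duhamel using the representation from Proposition~\ref{prop:Nu if connected to 1}, and matching the resulting commutator expectation to $2\pi\kappa$ by invoking Assumption~(iv) together with spatial locality.

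\textbf{Duhamel expansion of $T_-$.} Proposition~\ref{prop:Nu if connected to 1} applied to the time-independent generator $G = Q^{(2)}_- - K^{(2)}_-$ gives $T_- = U_-^* Q^{(1)} U_- - Q^{(1)}$, where $U_- = \ep{-2\pi\iu G^{(-)}}$ and $G^{(-)} = Q_{A_-} - K^{(2)}_{--}$ is the restriction of $G$ near $\partial^{(1)}_-$ (with $A_- = \Gamma^{(2)} \cap S^{(2)}_- \cap \partial_-^{(1),L/12}$ the corner region and $K^{(2)}_{--}$ the part of $K^{(2)}_-$ supported in $\partial_-^{(1),L/12}$). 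Since the single-site charges all commute, $[Q_{A_-}, Q^{(1)}] = 0$, and Duhamel's formula therefore yields
\[
 T_- = -2\pi\iu \int_0^1 V_-(s)^*\, [K^{(2)}_{--}, Q^{(1)}]\, V_-(s)\, \dd s, \qquad V_-(s) := \ep{-2\pi\iu s G^{(-)}}.
\]

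\textbf{Pointwise matching with $\iu\kappa$.} For any self-adjoint $O$ with $\|O\|\le C L^d$, Assumption~(iv) implies $\langle\Omega, [O, Q^{(1)}]\Omega\rangle = \langle\Omega, [O, K^{(1)}_- + K^{(1)}_+]\Omega\rangle + \vsmall$, since the $\vsmall$ error in $(Q^{(1)} - K^{(1)}_- - K^{(1)}_+ - c)\Omega$ paired against a vector of norm $\caO(L^d)$ is still $\vsmall$. Taking $O = K^{(2)}_{--}$, observing that $[K^{(2)}_{--}, K^{(1)}_+] = \vsmall$ by spatial disjointness, and noting that the remaining pieces $K^{(2)}_{-m}, K^{(2)}_{-+}$ of $K^{(2)}_-$ are supported in $S^{(1)}_m$ and $S^{(1)}_+$ respectively, hence at distance $\ge cL$ from $K^{(1)}_-$ so that $[K^{(2)}_{--}, K^{(1)}_-] = [K^{(2)}_-, K^{(1)}_-] + \vsmall$, one obtains the $s=0$ value
$\langle\Omega, [K^{(2)}_{--}, Q^{(1)}]\Omega\rangle = \langle\Omega, [K^{(2)}_-, K^{(1)}_-]\Omega\rangle + \vsmall = \iu\kappa + \vsmall$.

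\textbf{Extending the identity uniformly in $s$.} The main obstacle is to show that the integrand equals $\iu\kappa + \vsmall$ uniformly in $s\in[0,1]$. The approach is to conjugate the approximate eigenvector relation $(Q^{(1)} - K^{(1)}_- - K^{(1)}_+)\Omega = c\Omega + \vsmall$ by $V_-(s)$; since $V_-(s) \in \caA_{\partial_-^{(1),L/12}}$ commutes with $K^{(1)}_+$, this produces an analogous relation for $V_-(s)\Omega$ in which $Q^{(1)}$ is replaced by $V_-(s)Q^{(1)}V_-(s)^*$ and $K^{(1)}_-$ by $V_-(s)K^{(1)}_-V_-(s)^*$ — both quasi-local perturbations near $\partial^{(1)}_-$ with norm $\caO(L^d)$. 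Re-running the matching argument inside the state $V_-(s)\Omega$, and using a locality/clustering estimate to show that the commutator of $K^{(2)}_{--}$ with the perturbations $V_-(s)K^{(1)}_-V_-(s)^* - K^{(1)}_-$ and $V_-(s)Q^{(1)}V_-(s)^* - Q^{(1)}$ has expectation $\vsmall$ on $V_-(s)\Omega$, gives the uniform identity $\langle\Omega, V_-(s)^*[K^{(2)}_{--}, Q^{(1)}] V_-(s)\Omega\rangle = \iu\kappa + \vsmall$. Integrating,
\[
 \langle\Omega, T_-\Omega\rangle = -2\pi\iu \int_0^1 (\iu\kappa + \vsmall)\, \dd s = 2\pi\kappa + \vsmall,
\]
and integrality of $2\pi\kappa$ is then an immediate consequence of Theorem~\ref{thm:main}.
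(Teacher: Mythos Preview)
Your overall strategy --- Duhamel expansion of $T_-$ and matching the commutator expectation at $s=0$ to $\iu\kappa$ via Assumption~(iv) and locality --- coincides with the paper's. The $s=0$ computation is fine. The gap is in the paragraph ``Extending the identity uniformly in $s$''.

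The step you label as ``using a locality/clustering estimate to show that the commutator of $K^{(2)}_{--}$ with the perturbations $V_-(s)K^{(1)}_-V_-(s)^* - K^{(1)}_-$ and $V_-(s)Q^{(1)}V_-(s)^* - Q^{(1)}$ has expectation $\vsmall$'' does not go through. All three operators $K^{(2)}_{--}$, $V_-(s)Q^{(1)}V_-(s)^* - Q^{(1)}$, and $V_-(s)K^{(1)}_-V_-(s)^* - K^{(1)}_-$ are supported in the \emph{same} corner region $\partial^{(1),L/12}_- \cap S^{(2)}_-$, so locality gives no smallness for their commutators. Clustering does not help either: the expectation is taken in $V_-(s)\Omega$, which is not the ground state, and conjugating back by $V_-(s)$ merely moves the problem around without separating supports. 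Even if you grant those two claims, you are left with $\langle V_-(s)\Omega,[K^{(2)}_{--},K^{(1)}_-]V_-(s)\Omega\rangle$, which still depends on $s$ and is not obviously equal to $\iu\kappa$.

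The paper removes the $s$-dependence by a different mechanism. One first uses the Lieb-Robinson bound on the \emph{local} observable $[G^{(-)},Q^{(1)}]$ to replace the conjugation by $\ep{-2\pi\iu s G^{(-)}}$ with conjugation by $\ep{-2\pi\iu s G}$, i.e.\ one \emph{un-truncates} the generator. Then one observes the factorization
\[
\ep{\iu\phi(Q^{(2)}-K^{(2)}_- - K^{(2)}_+)} = \ep{\iu\phi G}\,\ep{\iu\phi Q^{(2)}_m}\,\ep{\iu\phi(Q^{(2)}_+ - K^{(2)}_+)},
\]
where the last two factors commute with any operator supported in $S^{(2)}_-$. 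Since $\Omega$ is an approximate eigenvector of the left-hand side by Assumption~(iv) applied in the $x_2$-direction, conjugation by $\ep{\iu\phi G}$ acts trivially on the expectation of $[G^{(-)},Q^{(1)}]$ in $\Omega$, up to $\vsmall$. This kills the $s$-dependence outright and reduces the integral to $2\pi\iu\langle\Omega,[G^{(-)},Q^{(1)}]\Omega\rangle$, after which your $s=0$ matching finishes the proof. The missing idea, in short, is to invoke the $x_2$-version of local charge fluctuations on $\Omega$ rather than trying to work in the rotated state $V_-(s)\Omega$.
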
 

\begin{proof}

We recall from Section~\ref{sec: trivial unitaries} that
\begin{equation*}
T_- = e^{2 \pi \iu G_-} Q^{(1)} e^{-2 \pi \iu G_-}  - Q^{(1)},
\end{equation*}
where $G_-$ is a restriction of $G$ to the region $\partial_{-}^{{L}/{12}}= \{|x_1| \leq \tfrac{L}{12}\}$, see Figure~\ref{fig:QHE sets}. We shall repeatedly use that (a) $G_-$ is supported in the corner $S^{(2)}_- \cap \partial_{-}^{L/12}$, and (b) it is the restriction of an operator that can be written as sums of local terms, see~(\ref{Ks}).
\begin{figure}
\centering
\includegraphics[width = 0.4\textwidth]{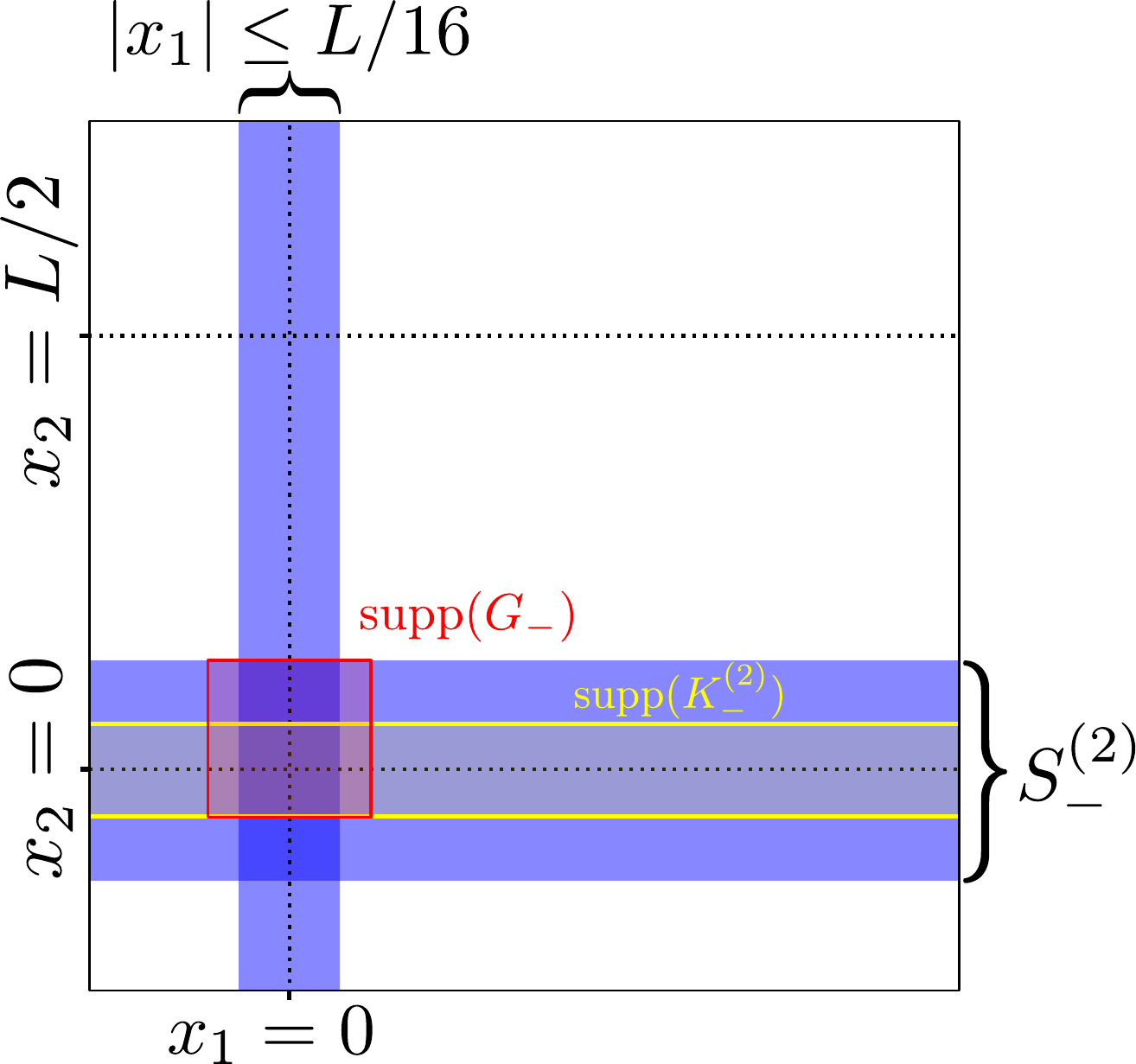}
\caption{The strips used in the quantum Hall setting.}\label{fig:QHE sets}
\end{figure}

Now, the fundamental theorem of calculus yields
\begin{align}
\langle\Omega,T_-\Omega \rangle  &= \iu \int_0^{2 \pi} \langle\Omega,   \ep{\iu \phi  G_-} [G_-, Q^{(1)} ] \ep{-\iu \phi G_-  }\Omega  \rangle \mathrm{d} \phi \nonumber\\
&=
\iu \int_0^{2 \pi} \langle\Omega,   \ep{\iu \phi  G} [G_-, Q^{(1)} ] \ep{-\iu \phi G }\Omega  \rangle \mathrm{d} \phi + \caO(L^{-\infty}),\label{Some QHE Eq}
\end{align}
where the second equality follows from the Lieb-Robinson bound as in the proof of Proposition~\ref{prop:Nu if connected to 1}. Now we use the specific form of $G$, in particular the identity
\begin{equation*}
\ep{\iu\phi(Q^{(2)} - K^{(2)}_- - K^{(2)}_+)}
=\ep{\iu\phi(Q^{(2)}_- - K^{(2)}_-)}\ep{\iu\phi Q^{(2)}_m}\ep{\iu\phi(Q^{(2)}_+ - K^{(2)}_+)}.
\end{equation*}
Since the last two factors on the right hand side commute with any $A$ supported in $S^{(2)}_{-}$ we conclude that for any such $A$,
$$
\langle \Omega, \ep{\iu\phi (Q^{(2)}_- - K^{(2)}_-)} A \ep{ -\iu\phi(Q^{(2)}_- - K^{(2)}_-)} \Omega \rangle = \langle \Omega, A \Omega \rangle + \caO(L^{-\infty})
$$
by the invariance~(\ref{A:QVar}) of $\Omega$ under $Q^{(2)} - K^{(2)}_- - K^{(2)}_+$. Using this with $A = G_-$ in~(\ref{Some QHE Eq}), see Property~(a) above, we obtain
$$
\langle\Omega,T_-\Omega \rangle = 2 \pi \iu \langle \Omega, [G_-, Q^{(1)}] \Omega \rangle + \caO(L^{-\infty}).
$$
Using~(\ref{A:QVar}) now for $Q^{(1)} - K^{(1)}_- - K^{(1)}_+$ and by the fact that $K^{(1)}_+$ commutes with $G_-$, we get
\begin{align*}
\langle \Omega, [G_-, Q^{(1)}] \Omega \rangle 
&= \langle \Omega, [G_-, K^{(1)}_-] \Omega \rangle  +  \caO(L^{-\infty}) 
= \langle \Omega, [G, K^{(1)}_-] \Omega \rangle +  \caO(L^{-\infty}) 
\end{align*}
where we used Property (b) in the second equation. The restriction $Q^{(2)}_{--}$ of $Q^{(2)}_{-}$ to the corner $S^{(2)}_- \cap   \partial_{-}^{{L}/{12}} $ commutes with $K^{(1)}_+$, and so we proceed with the similar reasoning backwards to get
\begin{equation*}
 \langle \Omega, [Q^{(2)}_-, K^{(1)}_-] \Omega \rangle =  \langle \Omega, [Q^{(2)}_{--}, K^{(1)}_-] \Omega \rangle  = \langle \Omega, [Q^{(2)}_{--}, Q^{(1)}] \Omega \rangle +  \caO(L^{-\infty}) = \caO(L^{-\infty})
\end{equation*}
and hence $\langle \Omega, [G, K^{(1)}_-] \Omega \rangle = - \langle \Omega, [K^{(2)}_-, K^{(1)}_-] \Omega \rangle + \caO(L^{-\infty})$. Therefore,
\begin{equation*}
\langle\Omega,T_-\Omega \rangle 
= 2\pi \iu \langle\Omega,  [K^{(1)}_{-}, K^{(2)}_- ] \Omega  \rangle + \caO(L^{-\infty})
\end{equation*}
which is what we had set out to prove, see~(\ref{curvature}).
\end{proof}

For completeness of the discussion, we point out that the present argument proves quantization of the Hall conductance, but it does not explain the experimental observation of plateaux in the graph of the conductance versus the filling factor. We refer the reader to~\cite{AizenmanGraf} for a detailed discussion of the role of disorder in the integer quantum Hall effect.

\subsection{Example 5: Bloch's Theorem}

We now consider a local charge conserving Hamiltonian in the sense of Section~\ref{sec: case of gapped}. Then $[H,Q]$ is a sum of two terms strictly supported on strips of width $R$ around $\partial_\pm$. We denote
\begin{equation*}
\iu [H,Q] = J_- + J_+,
\end{equation*}
the current operators across lines $\partial_\pm$.

We assume that $H$ has a unique gapped ground state and hence local charge fluctuations. Then for any $t$, $\Omega$ is an exact eigenvector of the propagator $U(t) = \ep{-\iu t H}$. For any fixed $t$, $U(t)$ is almost local by the Lieb-Robinson bound, and connected to the identity. The index reads
\begin{equation}\label{BlochBeginning}
\langle\Omega, (U_-(t)\str Q U_-(t) - Q)\Omega\rangle \in\fatlattice,
\end{equation}
see~(\ref{T-connectedto1}). By the argument used in the proof of Proposition~\ref{prop:Nu if connected to 1}, 
\begin{equation*}
(U_-(t) Q U_-(t) - Q) = \int_0^t \langle\Omega, U(s) J_- U(s) \Omega\rangle \dd s + \caO(L^{-\infty}),
\end{equation*}
so that (\ref{BlochBeginning}) is equivalent to
\begin{equation*}
t \langle\Omega, J_- \Omega\rangle \in\fatlattice.
\end{equation*}
Since this vanishes at $t=0$, we have proved the following version of Bloch's theorem.
\begin{thm}
Let $H$ be a gapped, local charge conserving Hamiltonian with a unique ground state $\Omega$. Then
$$
\langle \Omega, J_- \Omega \rangle = \caO(L^{-\infty}).
$$
\end{thm}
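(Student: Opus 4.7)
The plan is to apply Theorem~\ref{thm:main} to the propagator $U(t) := \ep{-\iu tH}$ at a fixed time $t\in[0,1]$, to unpack the resulting $T_-(t)$ into a time integral of the current operator $J_-$, and finally to use the stability result Proposition~\ref{prop:stability} to pin the quantized value of the index to zero.

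First I would verify the hypotheses of Theorem~\ref{thm:main} for $U=U(t)$ and $\Omega$. Assumption~(iii) holds exactly since $\Omega$ is the non-degenerate gapped ground state, so $U(t)\Omega\propto\Omega$. Assumptions~(iv) and~(v) are furnished by Proposition~\ref{prop:Ks}. Assumption~(i) is the Lieb-Robinson bound at fixed time, and Assumption~(ii) follows from $[H,Q_\Lambda]=0$ together with the finite range of $H$. Moreover, $U(t)$ is connected to the identity in the sense of Section~\ref{sec: trivial unitaries} via the time-independent generator $H$, so Proposition~\ref{prop:Nu if connected to 1} supplies the natural splitting with $T_-(t)=U_-(t)\str Q\, U_-(t)-Q$, where $U_-(t)=\ep{-\iu tH_-}$ and $H_-=\sum_{X\subset\partial_-^{L/12}}\Phi_X$.

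Next, I would apply Duhamel's formula. Charge conservation and the finite range of $H$ ensure that for $L$ large $\iu[H_-,Q]=J_-$ exactly (the terms of $H_-$ straddling $\partial_+$ are simply not present), hence
\begin{equation*}
T_-(t)=\int_0^t U_-(s)\str J_- U_-(s)\,\dd s.
\end{equation*}
A Lieb-Robinson estimate, exactly as in the proof of Proposition~\ref{prop:Nu if connected to 1}, allows me to replace $U_-(s)$ by $U(s)$ in the integrand up to $\caO(L^{-\infty})$, uniformly on $s\in[0,1]$. Since $\Omega$ is an exact eigenvector of each $U(s)$, the integrand then evaluates to $\langle\Omega,J_-\Omega\rangle$, yielding
\begin{equation*}
\langle\Omega,T_-(t)\Omega\rangle = t\,\langle\Omega,J_-\Omega\rangle + \caO(L^{-\infty}),
\end{equation*}
with implicit constants uniform in $t\in[0,1]$.

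To conclude, Theorem~\ref{thm:main} tells me that the left-hand side lies in $\bbZ_{(\caO(L^{-\infty}))}$ for each fixed $t$. Applying Proposition~\ref{prop:stability} to the continuous family $t\mapsto U(t)$ (with $\Omega$ independent of $t$), whose hypotheses hold uniformly on $[0,1]$, the closest integer $n(t)$ is constant. Since $U(0)=\id$ gives $T_-(0)=0$ and hence $n(0)=0$, we get $n(t)\equiv 0$, so $\langle\Omega,T_-(t)\Omega\rangle=\caO(L^{-\infty})$ uniformly on $[0,1]$. Specializing to $t=1$ delivers $\langle\Omega,J_-\Omega\rangle=\caO(L^{-\infty})$. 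The only mildly delicate point is ensuring uniformity in $t$ of the $\caO(L^{-\infty})$ errors appearing both in Theorem~\ref{thm:main} and in the Lieb-Robinson replacement of $U_-(s)$ by $U(s)$; this is guaranteed by the $t$-independence of $H$ and the standard Lieb-Robinson machinery, so nothing essentially new is needed.
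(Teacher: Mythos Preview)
Your proposal is correct and follows essentially the same approach as the paper: apply Theorem~\ref{thm:main} to $U(t)=\ep{-\iu tH}$, use Proposition~\ref{prop:Nu if connected to 1} to identify $T_-(t)$, unpack it via Duhamel and Lieb--Robinson into $t\,\langle\Omega,J_-\Omega\rangle+\caO(L^{-\infty})$, and then use continuity in $t$ together with the vanishing at $t=0$. The only difference is cosmetic: you invoke Proposition~\ref{prop:stability} explicitly for the continuity step, whereas the paper leaves this implicit in the remark ``since this vanishes at $t=0$''.
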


We note that, unlike in~\cite{Bohm}, this version of Bloch's theorem does not require time-reversal invariance. However, clustering is an essential ingredient. In the preprint~\cite{watanabeBloch} which appeared shortly after the publication of the present work, this is not needed. The result there is however slightly weaker in two aspects: In higher dimensions it is concerned only with current densities; It only shows that the current has a power law decay in the system size.

%%%%%%%%%%%%%%%%%%%%%%%%%%%%%%%%%%%%%%%%%%%%%%%%%%%%%%%%%%%%%%%%%%%%%%%%%%%%%%%%%%%%%%%%%%%%%%%%%%%%%

\section{Proof of Theorem~\ref{thm:main}}\label{sec:Proofs}

\subsection{Approximate eigenvectors}

The variance of an operator $A$ in a state $\psi$, as defined in \eqref{def: var}, gives a handy way of expressing that $\psi$ is an approximate eigenvector of $A$. We state three lemmata that will be used later on. 

First of all, if $A$ has a small variance in the state $\psi$, then by definition $A$ acts on $\psi$ by multiplication by $\langle \psi, A\psi\rangle$, up to an error that is small in norm. This extends to products of operators having small variance. Specifically,
\begin{lemma}\label{lma:variance of product}
Let $(A_i)_{ i=1,\ldots, k}$ satisfy $\var_\psi(A_i)=\caO(L^{-\infty})$ and $\norm{A_i} =\caO(L^d)$ with $k$ fixed, i.e.\ independent of $L$.  Then
\begin{equation*}
\langle \psi, A_1\cdots A_k \psi \rangle=  \prod_{i=1}^k  \langle \psi, A_i \psi\rangle + \caO(L^{-\infty}).
\end{equation*}
\end{lemma}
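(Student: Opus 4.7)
The core observation is that a small variance means approximate eigenvector in a strong, norm-wise, sense: setting $a_i := \langle \psi, A_i \psi\rangle$, we have
$$\norm{(A_i - a_i)\psi}^2 = \var_\psi(A_i) = \caO(L^{-\infty}),$$
so $A_i \psi = a_i \psi + e_i$ with $\norm{e_i} = \caO(L^{-\infty})$. The plan is to peel off the operators one at a time, starting from the right, and to argue that each replacement $A_i \psi \rightsquigarrow a_i \psi$ produces an error which, although possibly amplified by the polynomially-bounded norms of the other operators, remains $\caO(L^{-\infty})$ because $\caO(L^{-\infty}) \cdot \caO(L^{pd}) = \caO(L^{-\infty})$ for any fixed polynomial growth.

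Concretely, I would proceed by induction on $k$. The case $k=1$ holds trivially. For the inductive step, write
$$\langle \psi, A_1 \cdots A_k \psi\rangle = \langle \psi, A_1 \cdots A_{k-1} (a_k \psi + e_k)\rangle = a_k \langle \psi, A_1 \cdots A_{k-1}\psi\rangle + \langle \psi, A_1 \cdots A_{k-1} e_k\rangle.$$
By Cauchy--Schwarz, the remainder is bounded by $\bigl(\prod_{i=1}^{k-1}\norm{A_i}\bigr)\norm{e_k} = \caO(L^{(k-1)d}) \cdot \caO(L^{-\infty}) = \caO(L^{-\infty})$ since $k$ is fixed. Applying the induction hypothesis to the first term gives $\langle \psi, A_1 \cdots A_{k-1}\psi\rangle = \prod_{i=1}^{k-1} a_i + \caO(L^{-\infty})$, and multiplying by $a_k$ (which is bounded by $\norm{A_k} = \caO(L^d)$) preserves the $\caO(L^{-\infty})$ error, yielding the claim.

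There is no real obstacle: the only subtlety worth flagging is that one must verify that the polynomial norm blow-up from the product of $k$ operators does not swallow the super-polynomial smallness of the variance. Since $k$ is $L$-independent and both the norm bound and the variance bound are uniform in $L$, the product of a fixed polynomial with $\caO(L^{-\infty})$ is still $\caO(L^{-\infty})$, so the induction closes cleanly.
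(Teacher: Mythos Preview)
Your proof is correct and follows essentially the same approach as the paper: set $a_i=\langle\psi,A_i\psi\rangle$, use $\norm{(A_i-a_i)\psi}=\caO(L^{-\infty})$, and peel off factors recursively, absorbing the polynomial norm growth into the $\caO(L^{-\infty})$ errors. The paper states this in one line (``split $A_i=a_i+(A_i-a_i)$ and apply the bound recursively''); your induction with the explicit Cauchy--Schwarz estimate simply spells out the details.
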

\begin{proof}
Let $a_i=\langle \psi, A_i \psi\rangle$.
We note that $ \norm{ (A_i-a_i) \psi}=\caO(L^{-\infty})$ by  $\var_\psi(A_i)=\caO(L^{-\infty})$. We split now $A_i=a_i+(A_i-a_i)$ and apply the bound on $(A_i-a_i) \psi$ recursively. 
\end{proof}
If $\Omega$ is clustering, a somewhat converse statement of the above holds. If a unitary has small variance and factorizes in two unitaries with disjoint and well separated supports, then $\Omega$ is an approximate eigenvector of each of the factors.
\begin{lemma}\label{lem:Var W1W2}
\label{varU}
Consider  unitaries $W_1,  W_2$ with $W_{1,2} \in \mathcal{A}_{X_{1,2}}$ and set $W=W_1W_2$. Let $\psi$ be clustering in $x_1$-direction, see~(\ref{A:clustering}). Then
\begin{equation*}
\var_\psi(W_1)  \leq \var_\psi(W) + 2 C \vert X_1\vert \vert X_2\vert \ep{-c d_1(X_1,X_2)}.
\end{equation*}
\end{lemma}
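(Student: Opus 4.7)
The idea is to reduce everything to a comparison of the scalar quantities $\langle\psi,W\psi\rangle$ and $\langle\psi,W_1\psi\rangle$, for which the clustering hypothesis is tailor-made.

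First I would exploit unitarity: for any unitary $A$ one has
$$\var_\psi(A)=\langle\psi,A^*A\psi\rangle-|\langle\psi,A\psi\rangle|^2=1-|\langle\psi,A\psi\rangle|^2,$$
so that, writing $w:=\langle\psi,W\psi\rangle$, $w_1:=\langle\psi,W_1\psi\rangle$ and $w_2:=\langle\psi,W_2\psi\rangle$, the desired inequality is equivalent to
$$|w|^{2}-|w_1|^{2}\le 2C|X_1||X_2|\,\ep{-c\,d_1(X_1,X_2)}.$$

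Next I would apply the clustering bound \eqref{A:clustering} to the pair $W_1\in\caA_{X_1}$, $W_2\in\caA_{X_2}$ (which are unitaries, so $\|W_i\|=1$) to get
$$|w-w_1w_2|\le C|X_1||X_2|\,\ep{-c\,d_1(X_1,X_2)}\,=:\,\eta.$$
Combined with $|w_2|\le\|W_2\|=1$, this yields $|w|\le|w_1||w_2|+\eta\le|w_1|+\eta$, hence $|w|-|w_1|\le\eta$.

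The key trick — and the reason one gets the clean prefactor $2C$ rather than a $2C + C^2(\cdots)$ — is to avoid squaring the estimate $|w|\le|w_1|+\eta$ directly. Instead I would factor
$$|w|^{2}-|w_1|^{2}=(|w|+|w_1|)(|w|-|w_1|).$$
The first factor is bounded by $2$ since both $|w|,|w_1|\le 1$; the second factor is bounded by $\eta$ from the previous step when it is non-negative, while otherwise the left-hand side is already negative and the inequality is trivial. Multiplying gives the claimed bound.

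The argument is essentially linear-algebraic once unitarity and clustering are in place, so there is no serious obstacle; the only non-obvious ingredient is the difference-of-squares factorization that sidesteps the naive quadratic error.
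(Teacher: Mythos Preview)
Your proof is correct and follows essentially the same route as the paper: both reduce to the identity $\var_\psi(A)=1-|\langle\psi,A\psi\rangle|^{2}$ for unitaries and then apply clustering to bound $|w|\le|w_1|+\eta$. The paper simply squares this last inequality, whereas your difference-of-squares factorization is a small refinement that yields the stated prefactor $2C$ exactly (naive squaring leaves an extra $\eta^{2}$ term, harmless for the applications but not matching the constant as written).
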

\begin{proof}
We first note that for any unitary
\begin{equation}\label{UnitaryVar}
\var_\psi(W) = 1 - \left\vert\left\langle \psi, W \psi\right\rangle\right\vert^2,
\end{equation}
and of course that $\vert \langle \psi, W \psi\rangle\vert\leq1$. By clustering,
\begin{equation*}
\left\vert\left\langle \psi, W \psi\right\rangle\right\vert 
\leq C\vert X_1\vert \vert X_2\vert \ep{-c d_1(X_1,X_2)}
+\left\vert\left\langle \psi, W_1 \psi\right\rangle\right\vert
\end{equation*}
The square of this inequality yields the claim by~(\ref{UnitaryVar}).
\end{proof}
Finally, we note the following variational characterization of $\var_\psi(A)$:
\begin{equation*}
\var_\psi(A) = \inf\left\{a\in\bbC:\Vert (A - a)\psi\Vert^2\right\}
\end{equation*}
since the infimum is reached at $a = \langle \psi, A\psi\rangle$.

\begin{lemma}\label{lma:varf}
Let $f\in C^1(\bbR;\bbC)$ and let $A$ be a self-adjoint operator. Then
$$
\var_\psi (f(A)) \leq \sup_{x \in \mathbb{R}} |f'(x)|^2 \var_\psi (A).
$$
\end{lemma}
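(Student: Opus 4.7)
The plan is to combine the variational characterization of the variance stated just before the lemma with the spectral theorem for the self-adjoint operator $A$. Set $a := \langle\psi, A\psi\rangle \in \bbR$. By the variational characterization,
$$
\var_\psi(f(A)) \leq \norm{(f(A) - f(a))\psi}^2,
$$
so it suffices to bound the right-hand side by $\sup_x|f'(x)|^2 \var_\psi(A)$.

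Next, I would invoke the spectral theorem: there is a scalar spectral measure $\dd\mu_\psi$ on $\mathrm{Spec}(A) \subset \bbR$ such that for any bounded Borel function $g$,
$$
\norm{g(A)\psi}^2 = \int |g(\lambda)|^2 \dd\mu_\psi(\lambda),\qquad \langle \psi, A\psi\rangle = \int \lambda \dd\mu_\psi(\lambda).
$$
In particular, $\var_\psi(A) = \int (\lambda - a)^2 \dd\mu_\psi(\lambda)$ and
$$
\norm{(f(A) - f(a))\psi}^2 = \int |f(\lambda) - f(a)|^2 \dd\mu_\psi(\lambda).
$$

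The key estimate is then the pointwise Lipschitz bound obtained from $f \in C^1(\bbR;\bbC)$. By the fundamental theorem of calculus applied to the real and imaginary parts,
$$
|f(\lambda) - f(a)| = \left|\int_a^\lambda f'(s) \dd s \right| \leq \sup_{x\in\bbR}|f'(x)| \cdot |\lambda - a|.
$$
Substituting this into the spectral integral yields
$$
\norm{(f(A) - f(a))\psi}^2 \leq \sup_{x\in\bbR}|f'(x)|^2 \int (\lambda - a)^2 \dd\mu_\psi(\lambda) = \sup_{x\in\bbR}|f'(x)|^2 \var_\psi(A),
$$
which combined with the opening inequality proves the claim. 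There is essentially no obstacle here; the only point requiring a sliver of care is that $a$ is real (so $f(a)$ makes sense and the Lipschitz bound along the real segment from $a$ to $\lambda$ is legitimate), which is automatic because $A$ is self-adjoint.
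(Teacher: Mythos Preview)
Your proof is correct and follows essentially the same route as the paper: both use the variational characterization of the variance to reduce to bounding $\norm{(f(A)-f(a))\psi}^2$ with $a=\langle\psi,A\psi\rangle$, and both invoke the fundamental theorem of calculus to get the Lipschitz bound $|f(\lambda)-f(a)|\leq \sup_x|f'(x)|\,|\lambda-a|$. The only cosmetic difference is that the paper writes this at the operator level as $(f(A)-f(\lambda))\psi=\int_0^1 f'(t(A-\lambda)+\lambda)(A-\lambda)\psi\,\dd t$ via functional calculus, whereas you pass through the scalar spectral measure; the content is the same.
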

\begin{proof}
For any $\lambda\in\bbR$ we have by functional calculus
$$
(f(A) -f(\lambda)) \psi = \int_0^1 f'(t (A - \lambda) + \lambda) (A - \lambda) \psi \dd t.
$$
By the variational formula above, this gives
$$
\var_\psi (f(A)) \leq \sup_{x \in \mathbb{R}} |f'(x)|^2 \|(A -\lambda) \psi \|^2.
$$
Taking $\lambda = \langle \psi, A \psi \rangle$ finishes the proof.
\end{proof}

%%%%%%%%%%%%%%%%%%%%%%%%%%%%%%%%%%%%%%%%%%%%%%%%%%%%%%%%%%%%%%%%%%%%%%%%%%%%%%%%%%%%%%%%%%%%%%%%%%%%%

%%%%%%%%%%%%%%%%%%%%%%%%%%%%%%%%%%%%%%%%%%%%%%%%%%%%%%%%%%%%%%%%%%%%%%%%%%%%%%%%%%%%%%%%%%%%%%%%%%%%%
\subsection{Local charge fluctuations}

By assumption, the state $\Omega$ has local charge fluctuations, so we can modify $Q$ (and therefore also $ Q^U:=U^*QU$) at the boundaries of its support $\Gamma$ to obtain operators $\overline{Q},{\overline{Q}^U} $ that have $\Omega$ as an approximate eigenvector. This is the content of the next lemma.

We recall the operators $\K_\pm\in\caA_{\partial_\pm^{L/16}}$  given by~(\ref{A:QVar}). We write $K_\pm^U  \in\caA_{S_{\pm}}$ for local approximations of $U^*K_\pm U$ which, by~(\ref{DefLocalUnitary}) satisfy
\begin{equation}\label{KU}
\norm{K_\pm^U- U^*K_\pm U}  = \caO(L^{-\infty}).
\end{equation}
Note that the bound holds even though both the norm and the support of $K_\pm$ grow with $L$ because of the almost exponential locality of $U\str(\cdot) U$ and our choice of strips $S_{\pm}$ with $L$-dependent widths, see~(\ref{DefLocalUnitary}) and Figure~\ref{fig: torus}. Finally, we recall $T_\pm $ introduced in Assumption~(ii).
\begin{lemma}\label{lma:Qbar}
We define
\begin{equation*}
\overline{Q} := Q - K_- -  K_+,
\end{equation*}
\begin{equation}\label{Qbar}
{\overline{Q}^U} := Q +T_-+T_+ -  K_-^U - K_+^U.
\end{equation}
Then 
\begin{equation*}
{\overline{Q}^U} - U\str \overline Q U = \caO(L^{-\infty}), \qquad 
\ep{\iu\phi {\overline{Q}^U} } - U\str \ep{\iu\phi \overline{Q}} U= \caO(L^{-\infty}),
\end{equation*}
for any fixed $\phi\in\bbR$, and the variances of $\overline{Q},\overline{Q}^U,\ep{-\iu \phi \overline{Q}},\ep{-\iu \phi {\overline{Q}^U}}$ in the state $\Omega$ are all $\caO(L^{-\infty})$.
\end{lemma}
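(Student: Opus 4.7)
\medskip

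\noindent\textbf{Proof proposal.} The plan is to handle the three assertions in order, with the second and third reduced to the first.

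For the operator identity $\overline{Q}^U - U\str\overline{Q}U = \caO(L^{-\infty})$, I would simply expand the right-hand side and substitute the already established approximate identities: Assumption~(ii) gives $U\str Q U = Q + T_- + T_+ + \caO(L^{-\infty})$, while definition~(\ref{KU}) yields $U\str K_\pm U = K_\pm^U + \caO(L^{-\infty})$. Combining the three relations and comparing with~(\ref{Qbar}) produces the claim directly. No locality manipulation is needed here, just algebraic substitution.

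For the exponentiated version, I would apply Duhamel's formula to the two self-adjoint operators $A := \overline{Q}^U$ and $B := U\str \overline{Q} U$:
\begin{equation*}
\ep{\iu\phi A} - \ep{\iu\phi B} = \iu\int_0^\phi \ep{\iu(\phi-s)A}(A-B)\ep{\iu s B}\dd s,
\end{equation*}
whence $\norm{\ep{\iu\phi A} - \ep{\iu\phi B}} \leq |\phi|\,\norm{A-B} = \caO(L^{-\infty})$ for any fixed $\phi$. The point to watch is that $\phi$ is held fixed while $L\to\infty$, so the polynomial-in-$\phi$ prefactor is harmless.

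For the variance bounds, that of $\overline{Q}$ is exactly Assumption~(iv). The bounds on the exponentials follow from Lemma~\ref{lma:varf}, giving $\var_\Omega(\ep{\iu\phi \overline{Q}}) \leq |\phi|^2\var_\Omega(\overline{Q}) = \caO(L^{-\infty})$, and similarly for $\overline{Q}^U$ once its variance is controlled. The only nontrivial piece is $\var_\Omega(\overline{Q}^U)$. I would first reduce it, via the variational characterization $\var_\Omega(X)^{1/2} = \min_c\norm{(X-c)\Omega}$ which implies $|\var^{1/2}(A)-\var^{1/2}(B)|\leq \norm{A-B}$, to computing $\var_\Omega(U\str\overline{Q}U)$. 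Writing $U\Omega = \alpha\Omega + \eta$ with $\alpha = \langle\Omega,U\Omega\rangle$ and $\norm{\eta}^2 = \var_\Omega(U) = \caO(L^{-\infty})$ (hence also $|\alpha|^2 = 1+\caO(L^{-\infty})$ by~(\ref{UnitaryVar}); likewise $U\str\Omega = \bar\alpha\Omega + \eta'$ with $\norm{\eta'} = \caO(L^{-\infty})$), one expands $U\str\overline{Q}U\Omega$ and isolates the component parallel to $\Omega$. Using $\norm{(\overline{Q}-\langle\overline{Q}\rangle)\Omega} = \caO(L^{-\infty})$ from (iv) and the bounds $\norm{\overline{Q}},\norm{K_\pm},\norm{T_\pm} = \caO(L^d)$ gives $\norm{U\str\overline{Q}U\Omega - |\alpha|^2 \langle\overline{Q}\rangle \Omega} = \caO(L^{-\infty})$, and the variational formula closes the bound.

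The main obstacle, and the reason for most of the care above, is the mismatch between the polynomial growth $\norm{\overline{Q}},\norm{K_\pm} = \caO(L^d)$ and the super-polynomial smallness of terms like $\norm{\eta},\norm{\eta'},\var_\Omega(\overline{Q})^{1/2}$. Each bookkeeping step must ensure that whenever a $\caO(L^{-\infty})$ vector (or operator) is multiplied against a $\caO(L^d)$ norm, the product is still $\caO(L^{-\infty})$; this works precisely because super-polynomial decay absorbs any fixed polynomial factor. Once this accounting is consistently carried out, all claims follow.
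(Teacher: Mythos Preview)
Your proposal is correct and follows the paper's proof closely: the first claim by direct substitution of Assumption~(ii) and~(\ref{KU}), the second by Duhamel, the variance of $\overline{Q}$ by Assumption~(iv), and the exponentials by Lemma~\ref{lma:varf}. The only difference is cosmetic: for $\var_\Omega(\overline{Q}^U)$ the paper invokes Lemma~\ref{lma:variance of product} applied to the product $U^*\overline{Q}U$, whereas you unfold that lemma by hand with the decomposition $U\Omega=\alpha\Omega+\eta$; the recursive splitting $A_i=a_i+(A_i-a_i)$ in the proof of Lemma~\ref{lma:variance of product} is exactly your expansion, so the two arguments are the same.
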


\begin{proof}
The first claim is by~(\ref{A:Uloc}) and~(\ref{KU}). It implies the second one by Duhamel's formula. The claim on the variance of $\overline{Q}$ is by assumption, see~(\ref{A:QVar}). For the variance of $\overline{Q}^U$, we further use that $\Omega$ is an approximate eigenvector of $U$, see~(\ref{A:Invariance}), and Lemma~\ref{lma:variance of product}. Finally, Lemma~\ref{lma:varf} yields the variance of the exponentials.
\end{proof}
We now define decompositions of these operators analogous to~(\ref{Decomp of Q}) into three terms whose supports are respectively in $S_-, S_m , S_+$:
\begin{align}
\overline{Q} &= (Q_- - K_- ) + Q_\mathrm{m} + ( Q_+ - K_+)  \, =:   
\overline Q_- + Q_\mathrm{m} + \overline Q_+ , \label{qbarmiddle} \\
\overline{Q}^U &= (Q_- +T_-- K_-^U ) + Q_\mathrm{m} + (Q_+ +T_+- K_+^U )    \, =:   
\overline Q^U_- + Q_\mathrm{m} + \overline Q^U_+ .\label{qbarmiddleU}
\end{align}
Consider the unitary
\begin{equation}\label{def:T}
Z(\phi) := U^* \ep{\iu \phi \overline Q} U \ep{-\iu \phi \overline Q} = \ep{\iu \phi {\overline{Q}^U}} \ep{-\iu \phi \overline Q} + \caO(L^{-\infty}),
\end{equation}
where the second equality is by Lemma~\ref{lma:Qbar}. By Lemma~\ref{lma:variance of product}, the vector $\Omega$ is an approximate eigenvector of $Z(\phi)$, with an eigenvalue in $U(1)$ that is approximatively independent of $\phi$, hence equal to $1+\caO(L^{-\infty})$.  We will now argue that this constant and trivial phase can be decomposed in two non-trivial motions, taking place around the boundaries~$\partial_{\pm}$. Indeed, \eqref{def:T} is a product of two unitaries supported in  $\caA_{S_{\pm}}$,
\begin{equation}\label{TTmTp}
Z(\phi) = Z_-(\phi) Z_+(\phi) + \caO(L^{-\infty})
\end{equation}
with
\begin{equation}\label{Tpm}
Z_\pm(\phi) := \ep{\iu\phi  {\overline Q^U_{\pm}} }\ep{-\iu\phi \overline Q_{\pm}}.
\end{equation}
By Lemma~\ref{lem:Var W1W2}, $\Omega$ is an approximate eigenvector of $Z_\pm(\phi)$ as well, and we call the corresponding eigenvalue
\begin{equation}\label{Defchi}
\chi(\phi) := \left\langle\Omega, Z_-(\phi)\Omega\right\rangle.
\end{equation}
This $\chi(\phi)$ need not be independent of $\phi$, and we now show that it gives rise to the many-body index.

\subsection{A formula for $\chi(\phi)$}
The following lemma connects the phase $\chi(\phi)$ to the quantity $\left\langle \Omega, T_- \Omega \right\rangle$, whose quantization we want to prove. 
\begin{lemma}\label{lem:the phase understood}
For any $0\leq \phi \leq 2\pi$, 
\begin{equation}
\chi(\phi) = \ep{\iu \phi \left\langle \Omega, T_- \Omega \right\rangle} + \caO(L^{-\infty}).
\end{equation}
\end{lemma}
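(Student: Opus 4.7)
The plan is to derive an approximate first-order ODE for $\chi(\phi)$ whose coefficient is $i\langle\Omega,T_-\Omega\rangle$, and then integrate it.

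First I would differentiate $Z_-(\phi)=\ep{i\phi\overline{Q}^U_-}\ep{-i\phi\overline{Q}_-}$ in $\phi$. Since
$\overline{Q}^U_--\overline{Q}_-=T_-+(K_- -K_-^U)$,
one obtains
\begin{equation*}
\chi'(\phi)= \iu \langle\Omega, \ep{\iu\phi \overline{Q}^U_-}\bigl(T_-+K_--K_-^U\bigr)\ep{-\iu\phi \overline{Q}_-}\Omega\rangle.
\end{equation*}
Next I would show that $\Omega$ is an approximate eigenvector of both $\ep{\pm\iu\phi \overline{Q}_-}$ and $\ep{\pm\iu\phi \overline{Q}^U_-}$. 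For the first operator: Lemma~\ref{lma:Qbar} gives $\var_\Omega(\overline{Q})=\vsmall$, whence Lemma~\ref{lma:varf} (applied to $x\mapsto \ep{-\iu\phi x}$) yields $\var_\Omega(\ep{-\iu\phi\overline{Q}})=\vsmall$ uniformly in $\phi\in[0,2\pi]$; since $\ep{-\iu\phi\overline{Q}}$ factorizes as a commuting product with $\ep{-\iu\phi\overline{Q}_-}\in\caA_{S_-}$, Lemma~\ref{lem:Var W1W2} (together with clustering and $d_1(S_-,S_m\cup S_+)\geq cL$) transfers the small variance to $\ep{-\iu\phi\overline{Q}_-}$. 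The same argument applied to $\overline{Q}^U$ settles $\ep{\pm\iu\phi\overline{Q}^U_-}$. Denote
\begin{equation*}
\lambda_-(\phi)=\langle\Omega,\ep{-\iu\phi\overline{Q}_-}\Omega\rangle,\qquad \mu_-(\phi)=\langle\Omega,\ep{\iu\phi\overline{Q}^U_-}\Omega\rangle,
\end{equation*}
both of modulus $\leq 1$.

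Now I would apply the approximate eigenvector property to the left and right of the middle operator $A:=T_-+K_- -K_-^U$. The key observation is $\|A\|=\caO(L^d)$ (from Assumption (iv) and (\ref{eq: choice t})), so the remainders from replacing $\ep{-\iu\phi\overline{Q}_-}\Omega$ by $\lambda_-(\phi)\Omega$ and $\ep{-\iu\phi\overline{Q}^U_-}\Omega$ by $\overline{\mu_-(\phi)}\Omega$, of norm $\caO(L^{-\infty})$, give contributions $\caO(L^{-\infty})$ after multiplication by $A$. Invariance of $\Omega$ under $U$ (Assumption (iii)) together with (\ref{KU}) gives $\langle\Omega,K_-^U\Omega\rangle=\langle\Omega,K_-\Omega\rangle+\vsmall$, so that $\langle\Omega,A\Omega\rangle=\langle\Omega,T_-\Omega\rangle+\vsmall$. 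Likewise, the same replacement in the definition $\chi(\phi)=\langle\Omega,Z_-(\phi)\Omega\rangle$ yields
\begin{equation*}
\chi(\phi)=\lambda_-(\phi)\mu_-(\phi)+\vsmall.
\end{equation*}
Setting $\alpha:=\langle\Omega,T_-\Omega\rangle\in\bbR$, these two computations combine to the approximate ODE
\begin{equation*}
\chi'(\phi)=\iu\alpha\,\chi(\phi)+\varepsilon(\phi),\qquad \sup_{\phi\in[0,2\pi]}|\varepsilon(\phi)|=\vsmall,\qquad \chi(0)=1.
\end{equation*}
Finally, writing $\chi(\phi)=\ep{\iu\phi\alpha}\widetilde\chi(\phi)$, we get $\widetilde\chi'(\phi)=\ep{-\iu\phi\alpha}\varepsilon(\phi)$, and since $|\ep{-\iu\phi\alpha}|=1$ (as $\alpha\in\bbR$), integration on $[0,\phi]\subset[0,2\pi]$ gives $|\widetilde\chi(\phi)-1|\leq 2\pi\sup|\varepsilon|=\vsmall$, hence $\chi(\phi)=\ep{\iu\phi\alpha}+\vsmall$.

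The main obstacle is the bookkeeping of norms: because $A$, $\overline Q_-$ and $\overline Q^U_-$ all scale like $L^d$, one must be careful that whenever an operator of norm $\caO(L^d)$ multiplies a vector of norm $\vsmall$ the product remains $\vsmall$, and that all errors are uniform in $\phi\in[0,2\pi]$ so that the Grönwall-style integration at the end produces the stated $\vsmall$ remainder rather than a $\phi$-dependent one.
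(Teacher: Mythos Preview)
There is a genuine gap in your argument: the claim that $\Omega$ is an approximate eigenvector of $\ep{\pm\iu\phi\overline Q_-}$ (and of $\ep{\pm\iu\phi\overline Q^U_-}$) for \emph{general} $\phi$ is not justified, and in fact there is no reason to expect it to hold. Your appeal to Lemma~\ref{lem:Var W1W2} requires a factorization $\ep{-\iu\phi\overline Q}=W_1W_2$ with $W_1\in\caA_{S_-}$ and $W_2$ supported at $d_1$-distance $\geq cL$ from $S_-$. But the decomposition~(\ref{qbarmiddle}) gives $\ep{-\iu\phi\overline Q}=\ep{-\iu\phi\overline Q_-}\,\ep{-\iu\phi Q_m}\,\ep{-\iu\phi\overline Q_+}$, and the middle factor is supported on $S_m$, which is \emph{adjacent} to $S_-$: one has $d_1(S_-,S_m\cup S_+)=\caO(1)$, not $\geq cL$. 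The only value of $\phi$ at which this middle factor disappears is $\phi=2\pi$, where $\ep{2\pi\iu Q_m}=1$ by integrality of $Q_m$; this is precisely the special case the paper exploits in the final step of the proof of Theorem~\ref{thm:main}. For generic $\phi$ there is no mechanism preventing charge fluctuations across the fiducial line $x_1=L/8$ separating $\Gamma\cap S_-$ from $\Gamma\cap S_m$, since Assumption~(iv) only tames fluctuations across $\partial_\pm$.

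The paper's proof avoids this issue altogether via Lemma~\ref{lem:tool}: instead of trying to peel off $\ep{\pm\iu\phi\overline Q_-}$ from $\Omega$, it trades the $(-)$-sandwich $\ep{\iu\phi\overline Q^U_-}\,D_-\,\ep{-\iu\phi\overline Q_-}$ for the $(+)$-sandwich $\ep{-\iu\phi\overline Q^U_+}\,D_-\,\ep{\iu\phi\overline Q_+}$, using only that $\Omega$ is an approximate eigenvector of the \emph{full} operators $\ep{\iu\phi\overline Q}$ and $\ep{\iu\phi\overline Q^U}$. After this swap, $D_-\in\caA_{S_-}$ commutes with and is far from the sandwiching unitaries in $\caA_{S_+}$, so clustering factors the expectation as $\langle\Omega,D_-\Omega\rangle\,\langle\Omega,Z_+(-\phi)\Omega\rangle$, and a second application of Lemma~\ref{lem:tool} (with $A=1$) identifies the latter factor with $\chi(\phi)$. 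Your ODE step is fine; what is missing is a replacement for the unjustified eigenvector claim, and Lemma~\ref{lem:tool} is exactly that replacement.
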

To prove this lemma, we make use of the following tool:
\begin{lemma}\label{lem:tool}
For any $A$ such that $[A,Q_\mathrm{m}] = 0$, we have
\begin{equation}\label{symmetry}
\frac{1}{\Vert A \Vert}\left(\langle\Omega, \ep{\iu\phi  {\overline Q^U_{-}}  } A \ep{-\iu\phi\overline Q_{-}   }\Omega\rangle
- \langle\Omega, \ep{-\iu\phi{\overline Q^U_{+}}} A \ep{\iu\phi{\overline Q_{+}}}\Omega\rangle\right) = \caO(L^{-\infty}).
\end{equation}
\end{lemma}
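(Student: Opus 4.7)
The plan is to exploit the decompositions \eqref{qbarmiddle}--\eqref{qbarmiddleU}, whose three summands have pairwise disjoint (and, by the choice of strips, well-separated) supports. Consequently, $[\overline Q_\alpha, \overline Q_\beta] = 0$ and $[\overline Q^U_\alpha, \overline Q^U_\beta] = 0$ for $\alpha,\beta \in \{-, m, +\}$, so that
\begin{equation*}
\ep{\iu\phi \overline Q} = \ep{\iu\phi \overline Q_-}\ep{\iu\phi Q_m}\ep{\iu\phi \overline Q_+},\qquad
\ep{\iu\phi \overline Q^U} = \ep{\iu\phi \overline Q^U_-}\ep{\iu\phi Q_m}\ep{\iu\phi \overline Q^U_+}.
\end{equation*}
By Lemma~\ref{lma:Qbar}, $\Omega$ is an approximate eigenvector of both $\ep{\iu\phi \overline Q}$ and $\ep{\iu\phi \overline Q^U}$; denote the corresponding mean values by $\lambda := \langle\Omega,\ep{\iu\phi \overline Q}\Omega\rangle$ and $\mu := \langle\Omega,\ep{\iu\phi \overline Q^U}\Omega\rangle$. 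These numbers are unimodular up to $\caO(L^{-\infty})$ and, using the second identity of Lemma~\ref{lma:Qbar} together with $\var_\Omega(U)=\caO(L^{-\infty})$ and Lemma~\ref{lma:variance of product}, one verifies $\mu\bar\lambda = 1 + \caO(L^{-\infty})$.

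The central step is to rewrite the approximate eigenvector relations in a form ready to be inserted into the LHS of~(\ref{symmetry}). Applying $\ep{-\iu\phi \overline Q_-}$ to $\ep{\iu\phi \overline Q}\Omega = \lambda\Omega + \caO(L^{-\infty})$ and rearranging gives
\begin{equation*}
\ep{-\iu\phi Q_m}\ep{-\iu\phi \overline Q_-}\Omega \;=\; \bar\lambda\,\ep{\iu\phi \overline Q_+}\Omega + \caO(L^{-\infty}),
\end{equation*}
and the analogous identity for the $U$-primed objects,
\begin{equation*}
\ep{-\iu\phi Q_m}\ep{-\iu\phi \overline Q^U_-}\Omega \;=\; \bar\mu\,\ep{\iu\phi \overline Q^U_+}\Omega + \caO(L^{-\infty}).
\end{equation*}

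Now I would use $[A,Q_m]=0$ to insert $1=\ep{-\iu\phi Q_m}\ep{\iu\phi Q_m}$ on each side of $A$ in the LHS of~(\ref{symmetry}), rewriting it as
\begin{equation*}
\bigl\langle \ep{-\iu\phi Q_m}\ep{-\iu\phi \overline Q^U_-}\Omega,\; A\; \ep{-\iu\phi Q_m}\ep{-\iu\phi \overline Q_-}\Omega\bigr\rangle.
\end{equation*}
Substituting the two displayed identities and using that unitaries preserve norms (so the $\caO(L^{-\infty})$ errors produce only $\caO(\|A\|L^{-\infty})$ overall), this equals
\begin{equation*}
\mu\bar\lambda\,\bigl\langle \Omega,\; \ep{-\iu\phi \overline Q^U_+} A\, \ep{\iu\phi \overline Q_+}\Omega \bigr\rangle + \caO(\|A\|L^{-\infty}),
\end{equation*}
which, by $\mu\bar\lambda = 1 + \caO(L^{-\infty})$, is the RHS of~(\ref{symmetry}) up to $\caO(\|A\|L^{-\infty})$, as desired.

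The step I expect to require the most care is the bookkeeping of the $\caO(L^{-\infty})$ errors: each substitution based on approximate-eigenvector identities incurs one such error, and I must verify they do not get amplified by factors depending on $L$ (e.g.\ through $\|\overline Q_\pm\|$ or $\|\overline Q^U_\pm\|$) before being paired with the bounded operator $A$. This works out because $A$ appears only once, flanked by unitaries, and the identities are used at the level of vectors, not operators. The auxiliary estimate $\mu\bar\lambda=1+\caO(L^{-\infty})$ is the other place where the uniformity of the bounds in $L$ must be checked, but it reduces directly to Lemma~\ref{lma:variance of product} applied to $U^*$, $\ep{\iu\phi\overline Q}$, and $U$, together with Lemma~\ref{lma:Qbar}.
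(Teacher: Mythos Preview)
Your proof is correct and follows essentially the same route as the paper. The only cosmetic difference is the order of operations: the paper writes the \emph{exact} identities $\ep{-\iu\phi \overline Q_-}\Omega = \ep{\iu\phi Q_m}\ep{\iu\phi \overline Q_+}\ep{-\iu\phi\overline Q}\Omega$ and $\ep{\iu\phi \overline Q^U_-}\Omega = \ep{-\iu\phi Q_m}\ep{-\iu\phi \overline Q^U_+}\ep{\iu\phi\overline Q^U}\Omega$, substitutes, and only then invokes the approximate-eigenvector property (via Lemma~\ref{lma:variance of product}) to extract the scalar $\langle\Omega,Z(\phi)\Omega\rangle=1+\caO(L^{-\infty})$, whereas you apply the approximate-eigenvector property at the vector level first and carry scalars $\lambda,\mu$ explicitly; both reduce to the same computation and the same error control.
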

\begin{proof} Using the decomposition in commuting terms~(\ref{qbarmiddle}) and~(\ref{qbarmiddleU}), we have
\begin{align*}
\ep{-\iu\phi  {\overline Q_{-}}    }\Omega &=   \ep{\iu\phi Q_m}  \ep{\iu\phi {\overline Q_{+}}}   \ep{-\iu\phi\overline{Q}} \Omega, \\
 \ep{\iu\phi  {\overline Q^U_{-}}    }\Omega &=   \ep{-\iu\phi Q_m}  \ep{-\iu\phi {\overline Q^U_{+}}}   \ep{\iu\phi\overline{Q}^U} \Omega.
\end{align*}
Plugging these identities in the left hand side of \eqref{symmetry} and using $\ep{-\iu\phi Q_m} A \ep{\iu\phi Q_m}=A $, we get
\begin{align*}
\langle\Omega, \ep{\iu\phi  {\overline Q^U_{-}}  } A \ep{-\iu\phi\ \overline Q_{-} }\Omega\rangle 
&= \langle   \Omega,  \ep{\iu\phi{\overline{Q}^U}} \ep{-\iu\phi {\overline Q^U_{+}}   } A \ep{\iu\phi{\overline Q_{+}}}  \ep{-\iu\phi\overline{Q}}  \Omega\rangle \\
&= \langle   \Omega,  \ep{-\iu\phi{\overline Q^U_{+}}} A \ep{\iu\phi{\overline Q_{+}}}    \Omega\rangle \, \langle \Omega,   \ep{\iu \phi {\overline{Q}^U}} \ep{-\iu \phi \overline Q} \Omega \rangle  +\caO(L^{-\infty})
\end{align*}
where the equality on the second line follows because $\Omega$ is an approximate eigenvector of both $\ep{-\iu \phi {\overline{Q}^U}}$ and $ \ep{-\iu \phi \overline{Q}}$, see Lemma~\ref{lma:Qbar}, and Lemma~\ref{lma:variance of product}. The second factor is in fact $\langle \Omega,  Z(\phi) \Omega \rangle $, see \eqref{def:T}, and we argued above that this is $1+\caO(L^{-\infty})$.
\end{proof}
We are now ready to give the
\begin{proof}[Proof of Lemma~\ref{lem:the phase understood}]
By applying Lemma \ref{lem:tool} with $A=1$, we get 
\begin{equation}\label{particlehole}
\chi(\phi) = 
\langle\Omega, Z_+(-\phi)\Omega\rangle + \caO(L^{-\infty}).
\end{equation}
We compute
\begin{equation*}
\chi'(\phi) = \mathrm{i}\langle \Omega, \ep{\mathrm{i}\phi  { \overline Q^U_- }} D_-\ep{-\mathrm{i}\phi{ \overline Q_- }}\Omega\rangle
\end{equation*}
where $D_-=  \overline Q^U_-- \overline Q_-=T_-+ K_--K_-^U $. Lemma \ref{lem:tool} with $A= D_-$ now yields 
\begin{align*}
\chi'(\phi) &=\mathrm{i}
\langle \Omega, \ep{-\mathrm{i}\phi{ \overline Q^U_+ }}D_-\ep{\mathrm{i}\phi{ \overline Q_+ }}\Omega\rangle+ \caO(L^{-\infty}) \\
 & =\mathrm{i}\langle \Omega, D_-\Omega\rangle \langle\Omega, Z_+(-\phi)\Omega\rangle + \caO(L^{-\infty})\\
&=\mathrm{i}\langle \Omega, D_-\Omega\rangle \chi(\phi)+ \caO(L^{-\infty}) \\
&=\mathrm{i}\langle \Omega, T_-\Omega\rangle \chi(\phi)+ \caO(L^{-\infty})
\end{align*}
The second equality is by clustering, the third is by \eqref{particlehole}, and the fourth follows by~(\ref{KU}) and $\langle \Omega,K_- \Omega\rangle= \langle\Omega, U K_- U^*\Omega\rangle$, since $\Omega$ is an approximate eigenvector of $U$. 
 Since it is immediate from the definition that $\chi(0) = 1$, Lemma \ref{lem:the phase understood} follows. 
\end{proof}
\subsection{Proof of Theorem \ref{thm:main}} We now finish the proof by computing $\langle \Omega, T_-\Omega\rangle$ from the value of $\chi(2\pi)$.
At $\phi=2\pi$, the constraint $\ep{2\pi \iu  (Q_++T_+) }=1+\vsmall$ and $\ep{2\pi\iu Q_m}=1$ imply that
\begin{equation*}
\ep{2\pi \iu  {\overline Q^U_{-}} } =  \ep{2\pi \iu  (Q_- + T_- - K_-^U)} =  \ep{2\pi \iu  (Q + T_- + T_+ - K_-^U)} +\vsmall,
\end{equation*}
see the definition~(\ref{qbarmiddleU}). By local charge conservation, the exponent is equal to $U\str Q U - K_-^U$, and hence $U\str (Q- K_-) U$, up to $\caO(L^{-\infty})$. It follows that 
\begin{equation*}
 \ep{2\pi \iu  {\overline Q^U_{-}} } 
= U^*\ep{2\pi \iu  \overline Q_-}U+\vsmall,
\end{equation*}
where we used again that $\ep{2\pi\iu Q_m}=1=\ep{2\pi\iu Q_+}$, and hence
\begin{equation}\label{eq: res as product}
Z_-(2\pi)=  U^*\ep{2\pi \iu  \overline Q_-}U \ep{-2\pi \iu  \overline Q_-}+\vsmall.
\end{equation}
Since the factorization $\ep{2 \pi \iu \overline{Q}} = \ep{2\pi\iu \overline{Q}_-} \ep{2\pi\iu\overline{Q}_+}$ holds at $\phi=2\pi$,  Lemma~\ref{varU} implies that $\Omega$ is an approximate eigenstate of $\ep{2\pi\iu  \overline{Q}_-}$.
Therefore, all four operators on the right of \eqref{eq: res as product} have $\Omega$ as approximate eigenvector and it follows by Lemma~\ref{lma:variance of product} that $\chi(2\pi)=\langle Z_-(2\pi)\rangle=1+\vsmall$. 
The theorem follows now directly from Lemma~\ref{lem:the phase understood}.

\bigskip
\bigskip

\noindent \textbf{Acknowledgements.} The authors would like to thank Y.~Ogata for inspiring discussions. This research was supported in part by funding from the Simons Foundation and the Centre de Recherches Math\'{e}matiques, through the Simons-CRM
scholar-in-residence program.
The work of S.B. was supported by NSERC of Canada.  W.D.R. acknowledges the
support of the Flemish Research Fund FWO under grant
G076216N

\end{document}